\documentclass[11pt]{article}
\usepackage{amsmath}
\usepackage{graphicx,psfrag,epsf}
\usepackage{enumerate}
\usepackage{psfrag,epsf}
\usepackage{url} 
\usepackage{amsmath,amssymb,amsfonts,amsthm,mathtools,mathrsfs}
\usepackage{booktabs}
\usepackage{float}
\usepackage{subcaption}
\usepackage[title]{appendix}







\usepackage{bm,bbm}
\usepackage{color,float}
\usepackage{algorithm,algpseudocode}
\usepackage[symbol]{footmisc}
\usepackage{scalefnt}
\usepackage{authblk} 
\usepackage{multirow,centernot}
\usepackage{natbib}
\usepackage[colorlinks=true, citecolor=blue, urlcolor=blue]{hyperref}
\sloppy

\setcounter{secnumdepth}{4}

\usepackage{dsfont}
\usepackage[title]{appendix}
\urlstyle{same}

\usepackage{newtxtext,newtxmath} 
\usepackage{mhchem} 


\usepackage[left=1in,top=1.1in,right=0.5in,bottom=1in]{geometry}

\theoremstyle{definition}

\newtheorem{theorem}{Theorem}[section]

\newtheorem{proposition}[theorem]{Proposition}
\newtheorem{remark}[theorem]{Remark}

\makeatletter
\def\@seccntformat#1{\@ifundefined{#1@cntformat}%
	{\csname the#1\endcsname\quad}
	{\csname #1@cntformat\endcsname}
}
\makeatother

\markright{{\scriptsize RWprewetting-13; version from \today
}}
\newif\ifShowComments
\ShowCommentstrue
\def\strutdepth{\dp\strutbox}
\def\druk#1{\strut\vadjust{\kern-\strutdepth
        {\vtop to \strutdepth{%
                \baselineskip\strutdepth\vss
                        \llap{\hbox{#1}\quad}\null}}}}




\title{\bf
On classes of distributions on the unit interval: structural properties and application to inequality data}

\author{
\text{Roberto Vila}$^{1}\thanks{Corresponding author: Roberto Vila, email: {rovig161@gmail.com}
}$,
\,
\text{Helton Saulo}$^{1,2}$,
\, 
\text{Poliana Matos}$^{1}$
 and 
\text{Subhankar Dutta}$^{3}$
\\
{\small $^{1}$ Department of Statistics, University of Brasilia, Brasilia, 70910-900, Brazil}\\
{\small $^{2}$ Department of Economics, Federal University of Pelotas, Pelotas, 96010-610, Brazil}\\
{\small $^{3}$Department of Mathematics, Bioinformatics and Computer Applications, Maulana Azad National Institute of Technology, Bhopal, 462003, India}
}
\setcounter{Maxaffil}{0}


\begin{document}
	\maketitle 	
	\begin{abstract}
%
%
Probability distributions defined on the unit interval are widely used in fields ranging from econometrics to reliability studies. Traditional models such as the beta and Kumaraswamy distributions are well-established due to their flexibility and tractability. In this paper, we introduce two novel families of unit-interval distributions derived via non-injective transformations of the gamma ratio. These transformations, denoted $S_r$ and $T_r$, allow the construction of new random variables with support on $(0,1)$ and admit simple closed-form expressions for their densities when the underlying variables are independent gamma distributed. Notably, for $r = 1/2$, these constructions yield sample-based estimators of the Gini and Atkinson indices, establishing a direct link with classical inequality measures. We derive the distributional laws, cumulative distribution functions, quantile functions, and raw moments, and discuss maximum likelihood estimation for the proposed models. A Monte Carlo simulation study is conducted to assess the finite sample behavior of the maximum likelihood estimators under different parameter configurations. An application to cross-country Gini index data illustrates the flexibility and practical relevance of the proposed distributions in modeling real inequality indicators.



	\end{abstract}
	\smallskip
	\noindent
	{\small {\bfseries Keywords.} {Gamma distribution, Unit probabilistic model, 
	Maximum likelihood.}}
	\\
	{\small{\bfseries Mathematics Subject Classification (2010).} {MSC 60E05 $\cdot$ MSC 62Exx $\cdot$ MSC 62Fxx.}}
{
	\hypersetup{hidelinks}
}

\section{Introduction}

Probability models with support on the unit interval are useful in many fields, including econometrics \citep{Cribari2010}, natural sciences \cite{geissingeretal:2022}, and reliability \citep{crossetal:2006}, among others. In the context of economic analysis, several key indicators are naturally constrained to $(0,1)$, most notably measures of income inequality such as the Gini index \citep{Gini1936}. This coefficient is routinely used to summarize the degree of income concentration across countries. 

Traditionally, the beta \citep{fcn:04} and Kumaraswamy \citep{kuma1980} distributions have been the workhorses for modeling bounded continuous data, due to their flexibility and tractability. More recently, alternative unit--interval families have emerged, see, for example, \citet{Vila2024b,Vila03072024}, to address features such as bimodality and complex shapes. Despite these advances, most existing models are essentially phenomenological, offering limited theoretical connection with the stochastic mechanisms underlying classical inequality measures.

The beta distribution itself admits an interpretation as the ratio of two positive random variables. Indeed, when $X,Y \sim \mathrm{Gamma}(\alpha_i,\lambda)$ are independent and identically scaled,
\[
\frac{X}{X+Y} \sim \mathrm{Beta}(\alpha_1,\alpha_2),
\]
yielding the familiar beta model. However, this construction does not explicitly exploit the intrinsic relationship between income distributions and inequality indices. In particular, the Gini coefficient can be expressed in terms of ratios of independent gamma variables, which suggests that suitable transformations of $X/(X+Y)$ may lead to probability models that are naturally tailored for inequality data.

In this work, we introduce two new families of unit--interval distributions defined by non--injective transformations of the ratio $X/(X+Y)$. Writing
\[
S_r(u) = [\,1 - 4u(1-u)\,]^r,
\qquad
T_r(u) = 1 - [\,4u(1-u)\,]^r,
\quad 0 < u < 1,\ r > 0,
\]
and setting $W = S_r\bigl(X/(X+Y)\bigr)$ and $Z = T_r\bigl(X/(X+Y)\bigr)$, one obtains random variables supported on $(0,1)$ whose densities admit simple closed--form expressions when $X$ and $Y$ are independent $\mathrm{Gamma}(\alpha_i,\lambda_i)$. The particular case $r = 1/2$ recovers sample--based estimators of the Gini and Atkinson indices from a pair $(X,Y)$, thereby establishing a direct link between the proposed families and the classical measures of inequality \citep{Gini1936,Atkinson1970}. This connection provides a strong motivation for the use of the proposed models in the statistical analysis of Gini data.

The remainder of this paper is organized as follows. In Section~\ref{sec:02}, we introduce the new unit--interval families via the non--injective transformations $S_r$ and $T_r$ of the gamma ratio. In Section~\ref{sec:03}, we derive their distributional laws in terms of the underlying $X/(X+Y)$ density. In Section~\ref{sec:04}, we obtain closed--form expressions for the density and cumulative distribution functions of $W$ and $Z$. In Section~\ref{sec:05}, we present the quantile functions. In Section~\ref{sec:06}, we develop closed--form formulae for selected raw moments under special parameter regimes. In Section~\ref{sec:07}, we formulate maximum likelihood estimation, derive the likelihood equations, and discuss numerical solution strategies. In Section~\ref{simulation}, we report the results of a Monte Carlo study. In Section~\ref{Applications}, we illustrate the practical performance of the proposed models by fitting them to real Gini index data. Finally, in Section~\ref{sec:08}, we provide some concluding remarks. Additional technical derivations are collected in the Appendix.

\section{The unit probabilistic models}\label{sec:02}

Let $X$ and $Y$ be two (absolutely) continuous random variables, with positive support, both defined in the same probability space $(\Omega,\mathscr{B}(\Omega),\mathbb{P})$. Here, $\mathscr{B}(\Omega)$ denotes the Borel $\sigma$-algebra on $\Omega$ and $\mathbb{P}:\mathscr{B}(\Omega)\to [0,1]$ is a probability measure. 
In this work we define news random variables as the composition of a surjective transformation $\varphi_r:(0,1)\to (0,1)$ with the ratio $X/(X+Y)$, that is, 
\begin{align}\label{rep-stochastic}
	\varphi_r\circ \left({X\over X+Y}\right)
	=
	\varphi_r\left({X\over X+Y}\right):\Omega\to (0,1),
\end{align}
so that the addition of the parameter $r>0$ generates flexibility in the proposed models. 

Note that the complexity of the probabilistic models defined in \eqref{rep-stochastic} depends on the choices of variables $X$ and $Y$ and on its correlation structure, which are innumerable. 
{As an illustration, reference \citep{colin2025} explores various probabilistic aspects of model \eqref{rep-stochastic} under the conditions $\varphi_r(x)=[1-4x(1-x)]^r$ and $r = 1/2$, assuming that the variables $X$ and $Y$ are correlated Gamma-distributed with the same shape and scale parameters, and a linear correlation coefficient $\rho$.}
 For simplicity of presentation, from now on we choose  $X$ and $Y$ independent {(not necessarily identically distributed)} with gamma distributions, denoted by $X\sim\text{Gamma}(\alpha_1,\lambda_1)$, $\alpha_1,\lambda_1>0$, and $Y\sim\text{Gamma}(\alpha_2,\lambda_2)$, $\alpha_2,\lambda_2>0$, respectively. That is, the density functions of $X$ and $Y$ are given by
\begin{align*}
f_X(x)={\lambda_1^{\alpha_1}\over\Gamma(\alpha_1)}\, x^{\alpha_1-1}\exp(-\lambda_1 x), 
\quad 
f_Y(y)={\lambda_2^{\alpha_2}\over\Gamma(\alpha_2)}\, y^{\alpha_2-1}\exp(-\lambda_2 y), \quad x,y>0,
\end{align*}
respectively, where $\Gamma (z)=\int _{0}^{\infty }t^{z-1}\exp(-t){\text{d}}t$, $z>0,$ is the (complete) gamma function.
In this direction, a notable result in distribution theory, widely recognized in the literature, occurs when $\lambda_1=\lambda_2$ and $\varphi_r(x)=x,$ $0<x<1$, as this leads to 
$
	{X/(X+Y)}\sim\text{Beta}(\alpha_1,\alpha_2).
$
Alternative choices of $X$ and $Y$ beyond the gamma distributions have been thoroughly explored in the literature even for the case $\varphi_r(x)=x,$ $0<x<1$. For instance, \cite{Vila2024a} selected $X$ and $Y$ as independent random variables with general distributions, whereas \cite{Vila2024b} examined $X$ and $Y$ with a bivariate unit-logsymmetric distribution. Additionally, 
\cite{Vila2025} chose $X$ and $Y$ correlated according to the bivariate Birbaum-Saunders (BS) distribution, and \cite{Vila-Quintino2025} considered $X$ and $Y$ following a bivariate Fréchet distribution.

Simple examples include choosing the transformation $\varphi_r$ as injector functions, such as
$\varphi_r(x)
	=x^r,
$
$
	\varphi_r(x)
	=
	{2^rx^r/(x+1)^r},
	$
$
	\varphi_r(x)
	=
	{(1-x)^r/(x+1)^r},
$
$0<x<1$; $r>0$.
%
The scope of this work centers on non-injective  transformations $\varphi_r$, like:
\begin{align}\label{transf-using}
	S_{r}(x)
	\equiv
	\varphi_r(x)
	=
	[1-4x(1-x)]^r,
	\quad 
	T_{r}(x)
	\equiv
	\varphi_r(x)
	=
	1-[4x(1-x)]^r,
		\quad
	0<x<1; \ r>0.
\end{align}
Note that both transformations above are symmetric and have a zero at the point $x=1/2$ and assume the value of $1$ at the points $x=0$ and $x=1$.

Using the transformations $S_{r}$ and $T_{r}$ in \eqref{transf-using}, simple algebraic manipulations show that, the new random variables $W$ and $Z$, defined as
\begin{align}
W
\equiv 
	S_{r}\left({X\over X+Y}\right)
	=
	\left({Y-X\over X+Y}\right)^{2r},
\quad
Z
\equiv
T_{r}\left({X\over X+Y}\right)
=
1-
{(XY)^r\over \displaystyle \left[{1\over 2}(X+Y)\right]^{2r}}, \label{main-rv-1}
\end{align}
respectively,
have support in the unit interval.
\begin{remark}
	Note that when $r=1$, we get $W=Z$.
\end{remark}
\begin{remark}
The value of $r=1/2$ in formulas \eqref{main-rv-1} is special, because for this value the variables $W$ and $Z$ take the forms
\begin{align*}
	&W
	=
	{\vert X-Y\vert\over X+Y}
\quad \text{and} \quad 
	Z
	=
	1-
	{\sqrt{XY}\over {1\over 2}(X+Y)},
\end{align*}
which coincide, respectively, with the Gini coefficient \citep{Gini1936} and Atkinson index \citep{Atkinson1970} estimators formed by the random sample $(X,Y)$ of size $n=2$.
\end{remark}


\section{Deriving the laws of $W$ and $Z$}\label{sec:03}

Note that, for any Borel set $B$ on $\mathbb{R}$, we have
\begin{align}\label{law-w}
	\mathbb{P}\left(W\in B\right)
	=
	\mathbb{P}\left(S_{r}\left({X\over X+Y}\right)\in B\right)
	&=
	\mathbb{P}\left(\left\{\omega\in\Omega:{X(\omega)\over X(\omega)+Y(\omega)}\in S_{r}^{-1}(B)\right\}\right)
	=
	\int_{S_{r}^{-1}(B)} f_{X\over X+Y}(u){\rm d}u,
\end{align}
where $S_{r}^{-1}(B)=\{x\in\mathbb{R}:S_r(x)\in B\}$ is the inverse image of set $B$ under the function $S_r$. Since the Borel $\sigma$-algebra on $\mathbb{R}$ is generated by open or closed sets, without loss of generality, assume that $B=(a,b)$, with $a<b$. Hence,
\begin{align*}
	S_{r}^{-1}(B)
	=
	\begin{cases}
		\left({1-b_*^{^{1\over 2r}}\over 2},{1-a_*^{^{1\over 2r}}\over 2}\right)\bigcup \left({1+a_*^{^{1\over 2r}}\over 2},{1+b_*^{^{1\over 2r}}\over 2}\right), & B\cap (0,1)\neq \emptyset,
		\\[0,3cm]
		\emptyset, & B\cap (0,1)= \emptyset,	
	\end{cases}
\end{align*}
where $a_*=a\vee 0=\max\{a,0\}$ and $b_*=b\wedge 1=\min\{b,1\}$. Therefore, by using \eqref{law-w},  the law of $W$ can be written as
\begin{align}\label{law-W}
	\mathbb{P}\left(W\in B\right)
	=
	\begin{cases}
		\displaystyle
	\int_{{\left({1-b_*^{^{1\over 2r}}\over 2},{1-a_*^{^{1\over 2r}}\over 2}\right)\bigcup \left({1+a_*^{^{1\over 2r}}\over 2},{1+b_*^{^{1\over 2r}}\over 2}\right)}} f_{X\over X+Y}(u){\rm d}u, & B\cap (0,1)\neq \emptyset,
			\\[0,3cm]
	0, & B\cap (0,1)= \emptyset.
		\end{cases}
\end{align}

Similarly, we can show that the law of $Z$ can be expressed as
{\small 
\begin{align}\label{law-Z}
	\mathbb{P}\left(Z\in B\right)
	=
	\begin{cases}
			\displaystyle
		\int_{{\left({1-\sqrt{1-(1-b_*)^{1\over r}}\over 2},{1-\sqrt{1-(1-a_*)^{1\over r}}\over 2}\right)
			\bigcup 
			\left({1+\sqrt{1-(1-a_*)^{1\over r}}\over 2},{1+\sqrt{1-(1-b_*)^{1\over r}}\over 2}\right)}} f_{X\over X+Y}(u){\rm d}u, & B\cap (0,1)\neq \emptyset,
			\\[0,3cm]
			0, & B\cap (0,1)= \emptyset.
				\end{cases}
\end{align}
}

By \eqref{law-W} and \eqref{law-Z}, we note that  knowledge of the laws of $W$ and $Z$ relies on knowledge of  density of the independent ratio $X/(X+Y)$, with $X\sim\text{Gamma}(\alpha_1,\lambda_1)$ and $Y\sim\text{Gamma}(\alpha_2,\lambda_2)$, which is well-known in literature \citep[see Example 6 of reference][]{Vila2024a}:
\begin{align}\label{formula-pdf-gammas}
	f_{X\over X+Y}(v)
	=
	{1\over\lambda_1^{\alpha_2}\lambda_2^{\alpha_1} \text{B}(\alpha_1,\alpha_2)} \, 
	\dfrac{v^{\alpha_1-1}(1-v)^{\alpha_2-1}}
	{\displaystyle
	\left[\left({1\over\lambda_2}-{1\over\lambda_1}\right) v	+
	{1\over\lambda_1}\right]^{\alpha_1+\alpha_2}},
	\quad 0<v<1,
\end{align}
where $\text{B}(z_{1},z_{2})
=\int_{0}^{1}t^{z_{1}-1}(1-t)^{z_{2}-1}{\rm d}t$ is the (complete) beta function.


\section{Distribution and density functions of $W$ and $Z$}\label{sec:04}

By taking $a\to-\infty$ and $b=w\in(0,1)$ in \eqref{law-W}, we have $a_*=0$ and $b_*=w$. Then the  cumulative distribution function (CDF) of $W$ can be written as  
\begin{align}\label{exp-f-w}
	F_W(w)
	=
	\int_{1-w_r}^{w_r} f_{X\over X+Y}(u){\rm d}u,
		\quad 
	0<w<1,
\end{align}
where we have adopted the notation
\begin{align}\label{def-w-r}
	w_r\equiv {1+w^{\!^{1\over 2r}}\over 2}.
\end{align}
Consequently, the probability density function (PDF) of $W$ is given by (for 	$0<w<1$)
\begin{align}\label{PDF-W}
	f_W(w)
=
{w^{{1\over 2 r}-1}\over 4 r} 
\left[
f_{X\over X+Y}\left(w_r\right)
+
f_{X\over X+Y}\left(1-w_r\right)
\right].
\end{align}

\bigskip 
Analogously, by taking $a\to-\infty$ and $b=z\in(0,1)$ in \eqref{law-Z}, we have $a_*=0$ and $b_*=z$. Then the  CDF of $Z$ can be written as  
\begin{align}
	F_Z(z)\label{exp-f-z}
	=
	\int_{1-z_r}^{z_r}
	f_{X\over X+Y}(u){\rm d}u,
	\quad 
	0<z<1,
\end{align}
where we are using the notation
\begin{align}\label{def-z-r}
	z_r\equiv {1+\sqrt{1-(1-z)^{1\over r}}\over 2}.
\end{align}
Hence, the PDF of $Z$ is given by (for 	$0<z<1$)
\begin{align}\label{PDF-Z}
	f_Z(z)
	=
	{(1 - z)^{{1\over r}-1}\over 4r\sqrt{1 - (1 - z)^{1\over r}}}
	\left[
	f_{X\over X+Y}\left(z_r\right)
	+
	f_{X\over X+Y}\left(1-z_r\right)
	\right].
\end{align}


Plugging \eqref{formula-pdf-gammas} into formulas \eqref{PDF-W} and \eqref{PDF-Z}, we get (for $0<w,z<1$)
\begin{align}\label{PDF-W-gammas}
	f_W(w)
	=
	{1\over 4 r \lambda_1^{\alpha_2}\lambda_2^{\alpha_1} \text{B}(\alpha_1,\alpha_2)} 
	\,
	w^{{1\over 2 r}-1} G(w_r)
\end{align}
and
\begin{align}\label{PDF-Z-gammas}
	f_Z(z)
	&=
	{1\over 4r\lambda_1^{\alpha_2}\lambda_2^{\alpha_1} \text{B}(\alpha_1,\alpha_2)} \,
	{(1 - z)^{{1\over r}-1} G(z_r)\over \sqrt{1 - (1 - z)^{1\over r}}},
\end{align}
respectively, where
\begin{align}\label{def-G}
G(x)
\equiv
	\dfrac{\displaystyle
	x^{\alpha_1-1}\left(1-x\right)^{\alpha_2-1}}{\displaystyle
	\left[\left({1\over\lambda_2}-{1\over\lambda_1}\right)
	x+{1\over\lambda_1}\right]^{\alpha_1+\alpha_2}}
+
\dfrac{\displaystyle
	\left(1-x\right)^{\alpha_1-1} x^{\alpha_2-1}}{\displaystyle
	\left[\left({1\over\lambda_2}-{1\over\lambda_1}\right)
	(1-x)+{1\over\lambda_1}\right]^{\alpha_1+\alpha_2}}.
\end{align}
Figures \ref{fig:pdf5}-\ref{fig:pdf8} display a range of shapes for the densities of $W$ and $Z$.

%
%
%

\begin{figure}[H]
	\centering
	\includegraphics[width=15.0cm]{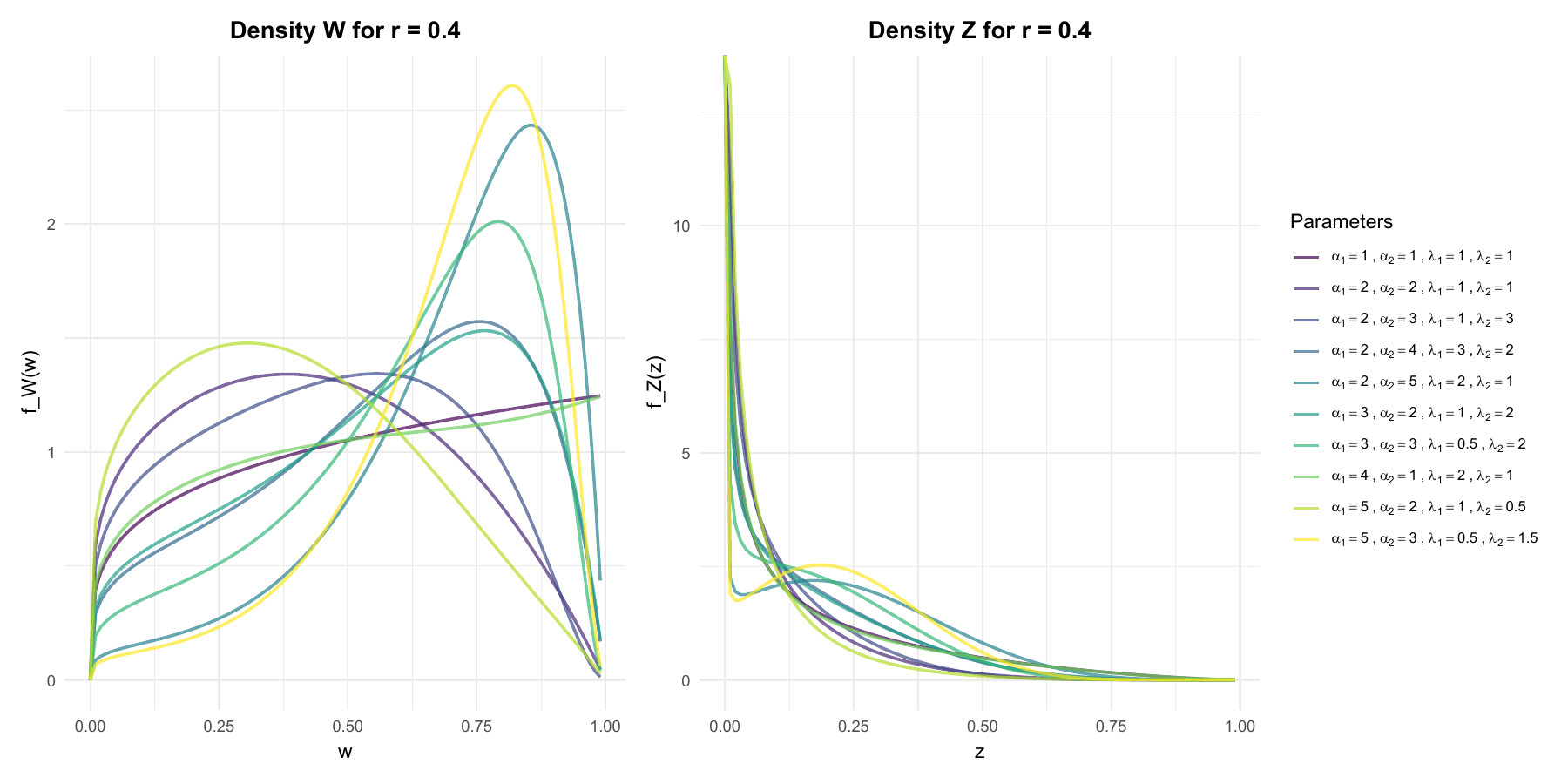}
	\caption{Plots of the PDFs $f_W$ (left) and $f_Z$ (right) with varying parameters $\alpha_1, \alpha_2, \lambda_1$  and $\lambda_2$, with $r=0.4$.}
	\label{fig:pdf5}
\end{figure}


\begin{figure}[H]
	\centering
	\includegraphics[width=15.0cm]{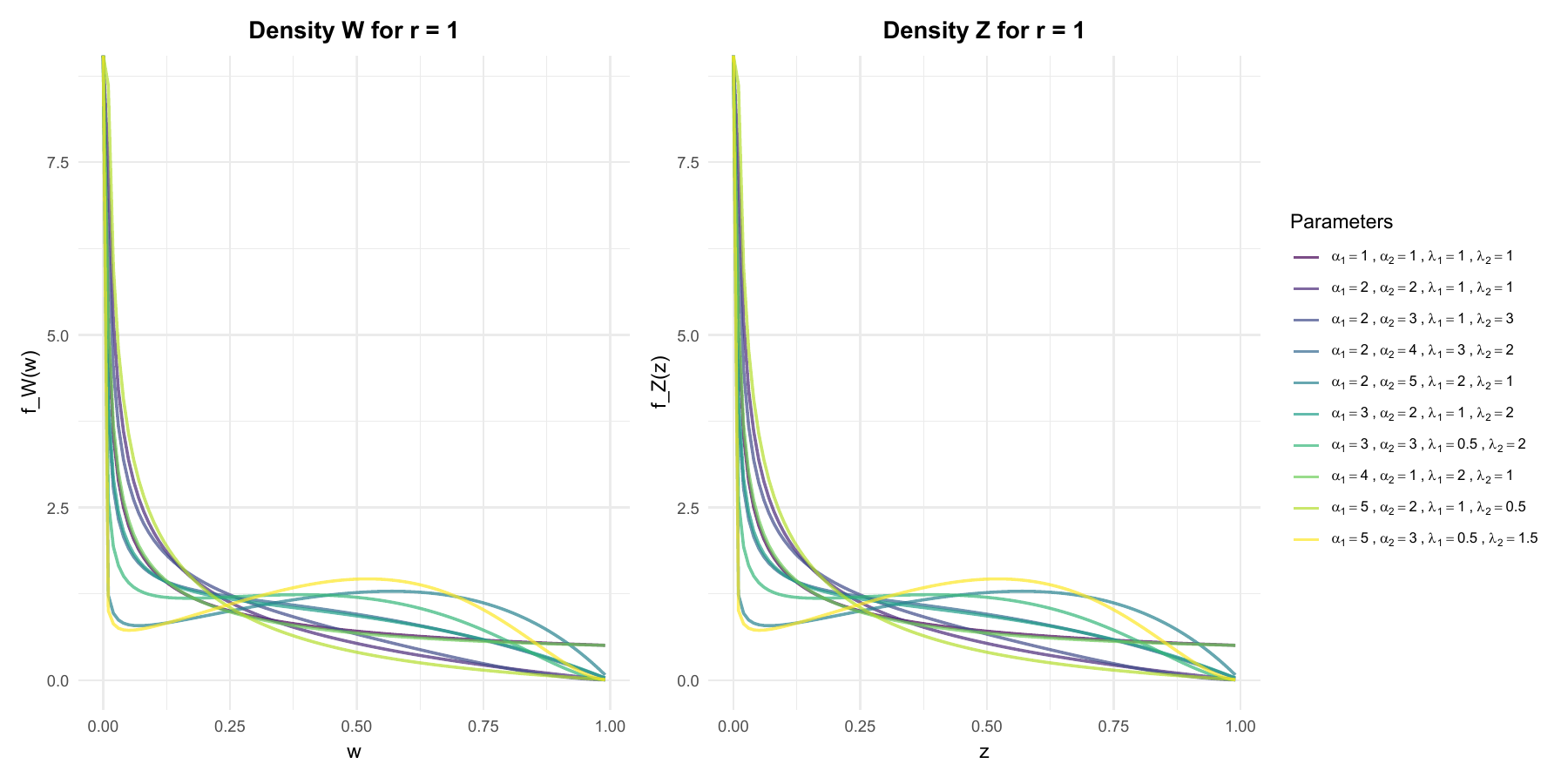}
	\caption{Plots of the PDFs $f_W$ (left) and $f_Z$ (right) with varying parameters $\alpha_1, \alpha_2, \lambda_1$  and $\lambda_2$, with $r=1$.}
	\label{fig:pdf7}
\end{figure}

\begin{figure}[H]
	\centering
	\includegraphics[width=15.0cm]{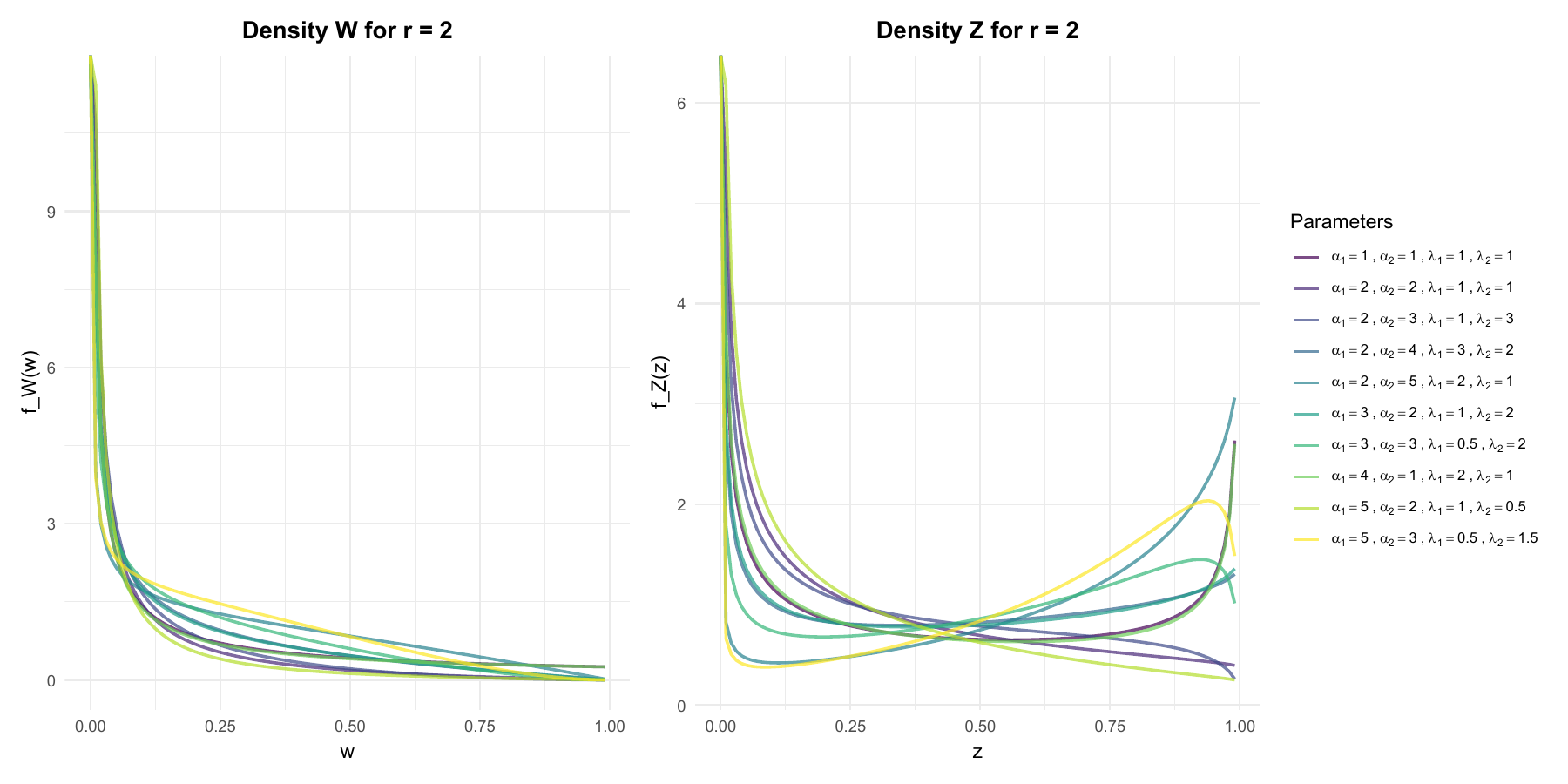}
	\caption{Plots of the PDFs $f_W$ (left) and $f_Z$ (right) with varying parameters $\alpha_1, \alpha_2, \lambda_1$  and $\lambda_2$, with $r=2$.}
	\label{fig:pdf8}
\end{figure}
%
%
%
%

\begin{remark}\label{rem-pdf-equal-lambda}
	In the special case $\lambda_1=\lambda_2$, we get
	\begin{align*}
		f_W(w)
		=
		{w^{{1\over 2 r}-1}\over 4 r  \text{B}(\alpha_1,\alpha_2)} 
		\left[
		{
			\displaystyle
		w_r^{\alpha_1-1}
		(1-w_r)^{\alpha_2-1}
		}
		\right.  
		+
		\left.
		{
			\displaystyle
		(1-w_r)^{\alpha_1-1}
		w_r^{\alpha_2-1}
		}
		\right],
		\quad
		0<w<1,
	\end{align*}
	and
	\begin{align*}
		f_Z(z)
		&=
		{(1 - z)^{{1\over r}-1}\over 4r \text{B}(\alpha_1,\alpha_2)\sqrt{1 - (1 - z)^{1\over r}}}
		\left[
		{\displaystyle
		z_r^{\alpha_1-1}(1-z_r)^{\alpha_2-1}}
		+
		{\displaystyle
		(1-z_r)^{\alpha_1-1} z_r^{\alpha_2-1}}
		\right],
		\quad
		0<z<1,
	\end{align*}
	where $w_r$ and $z_r$ are as given in \eqref{def-w-r} and \eqref{def-z-r}, respectively.
	\end{remark}
	\begin{remark}
	It is worth noting that in the case  $\lambda_1=\lambda_2\equiv\lambda$,
	the densities $f_W$ and $f_Z$ presented in Remark \ref{rem-pdf-equal-lambda} do not depend on $\lambda$, consistent with the fact that $W$ and $Z$, given in \eqref{main-rv-1}, respectively, are derived from scale-invariant transformations. That is, for $X\sim\text{Gamma}(\alpha_1,\lambda)$ and $Y\sim\text{Gamma}(\alpha_2,\lambda)$ we have $\lambda X\sim\text{Gamma}(\alpha_1,1)$ and $\lambda Y\sim\text{Gamma}(\alpha_2,1)$, and consequently, the random variables $W$ and $Z$ satisfy the identities
\begin{align*}
	&W
	=
	\left({Y-X\over X+Y}\right)^{2r}
	=
	\left({\lambda Y-\lambda X\over \lambda X+\lambda Y}\right)^{2r},
	\quad 
	Z
	=
	1-
	{(XY)^r\over \displaystyle \left[{1\over 2}(X+Y)\right]^{2r}}
=
	1-
	{(\lambda X \lambda Y)^r\over \displaystyle \left[{1\over 2}(\lambda X+\lambda Y)\right]^{2r}}, 
\end{align*}
	whose distribution is unaffected by $\lambda$.
\end{remark}

\section{Quantiles}\label{sec:05}

From \eqref{exp-f-w}, the $p$-quantile function $Q_W(p)$ of $W$, for given $p\in(0, 1)$, satisfies the relation
\begin{align*}
	F_W(Q_W(p))
=
F_{X\over X+Y}\left({1+Q_W^{\!^{1\over 2r}}(p)\over 2}\right)
-
F_{X\over X+Y}\left({1-Q_W^{\!^{1\over 2r}}(p)\over 2}\right).
\end{align*}

Analogously, from \eqref{exp-f-z}, the $p$-quantile function $Q_Z(p)$ of $Z$, for given $p\in(0, 1)$, satisfies
\begin{align*}
	F_Z(Q_Z(p))
	=
	F_{X\over X+Y}\left({1+\sqrt{1-[1-Q_Z(p)]^{1\over r}}\over 2}\right)
	-
	F_{X\over X+Y}\left({1-\sqrt{1-[1-Q_Z(p)]^{1\over r}}\over 2}\right).
\end{align*}

While $Q_W(p)$ and $Q_Z(p)$ can be theoretically derived from the above equations, their expressions lack simple closed forms, necessitating numerical methods to invert these equations for quantiles of $W$ and $Z$.

\section{Closed-form expressions for  the $n$th raw moments of $W$ and $Z$}\label{sec:06}

It is clear that, by \eqref{PDF-W-gammas} and \eqref{PDF-Z-gammas}, the $n$th raw moments of $W$ and $Z$ can be calculated by the following formulas:
\begin{align*}
	\mathbb{E}(W^n)
	=
	{1\over 4 r \lambda_1^{\alpha_2}\lambda_2^{\alpha_1} \text{B}(\alpha_1,\alpha_2)}
	\int_0^1 
	w^{n+{1\over 2 r}-1}
G(w_r)
	{\rm d}w,
\quad
	\mathbb{E}(Z^n)
	=
		{1\over 4 r \lambda_1^{\alpha_2}\lambda_2^{\alpha_1} \text{B}(\alpha_1,\alpha_2)}
	\int_0^1  
		{z^n(1 - z)^{{1\over r}-1} G(z_r)\over \sqrt{1 - (1 - z)^{1\over r}}}
	{\rm d}z,
\end{align*}
where $w_r$ and $z_r$ are as given in \eqref{def-w-r} and \eqref{def-z-r}, respectively, and $G(\cdot)$ is as in \eqref{def-G}. Note that the above integrals are well-defined due to the support of $W$ and $Z$ in $(0,1)$, but numerical methods are necessary for their evaluation due to analytical complexity.

The following Subsections \ref{expW}, \ref{rw-m2}, \ref{expW-l} and \ref{rwm-3} provide some analytical expressions for the raw moments of $W$ and $Z$ subject to parameter constraints.

\subsection{The $1$st raw moment of $W$: case $r=1/2$ and $1/3<\lambda_2/\lambda_1<2$
	 }\label{expW}

Under the conditions $r=1/2$ and $1/3<\lambda_2/\lambda_1<2$, in this part, we obtain (more precisely, in \eqref{exp-W-expp-1} and Remark \ref{remark-closed-ep-exp}) straightforward closed-form expressions for 
the $1$st raw moment (expected value) of 
\begin{align*}
	W=\frac{\vert X-Y\vert}{X+Y},
\end{align*}
where $X\sim\text{Gamma}(\alpha_1,\lambda_1)$ and $Y\sim\text{Gamma}(\alpha_2,\lambda_2)$ are independent.

\smallskip
Indeed, as the function $(0,\infty)\ni x\mapsto z\exp(-z x)$, $z>0$, is the density of an exponential distribution, we have
\begin{align}\label{id-pdf-exp}
	\int_{0}^{\infty}\exp(-z x){\rm d}x
	=
	{1\over z}.
\end{align}
By making $z=X+Y$ and by using the fact that $X$ and $Y$ are independent, we get
\begin{align}
	\mathbb{E}(W)
	=
	\mathbb{E}\left[\frac{\vert X-Y\vert}{X+Y}\right]
	=
	\mathbb{E}\left[\vert X-Y\vert \int_{0}^{\infty}\exp\left\{-\left(X+Y\right) x\right\}{\rm d}x\right]
=
	\int_{0}^{\infty}
	\mathbb{E}\left[
	\vert X-Y\vert 
	\exp\left\{-(X+Y) x\right\} 
	\right]{\rm d}x,
	\label{eq-1} 
\end{align}
where in the last step, we applied the Tonelli's theorem \citep{Folland1999} to justify the change in the order of integration.

On the other hand, note that
\begin{align}
		\mathbb{E}\left[
	\vert X-Y\vert 
	\exp\left\{-(X+Y) x\right\} 
	\right]
	&=
	\int_{0}^{\infty}
	\left[
	\int_{0}^{\infty}
	(v-u)\exp(-vx)
	{\rm d}F_Y(v)
	\right]
	\exp(-ux)
	{\rm d}F_X(u)
		\nonumber
	\\[0,2cm]
	&-
	\int_{0}^{\infty}
	\left[
	\int_{0}^{u}
	(v-u)\exp(-vx)
	{\rm d}F_Y(v)
	\right]
	\exp(-ux)
	{\rm d}F_X(u)
	\nonumber
	\\[0,2cm]
&+
\int_{0}^{\infty}
\left[
\int_{0}^{u}
(u-v)\exp(-vx)
{\rm d}F_Y(v)
\right]
\exp(-ux)
{\rm d}F_X(u).
\label{id-first}
\end{align}
Since
\begin{align*}
\int_{0}^{\infty}
\Bigg[
\int_{0}^{\infty}
(v-u)\exp(-vx)
{\rm d}F_Y(v)
\Bigg]
\exp(-ux)
=
\mathbb{E}\left[\exp(-Xx)\right]\mathbb{E}\left[Y\exp(-Yx)\right]
-
\mathbb{E}\left[X\exp(-Xx)\right]\mathbb{E}\left[\exp(-Yx)\right]
\end{align*}
and
\begin{align*}
		&-
	\int_{0}^{\infty}
	\left[
	\int_{0}^{u}
	(v-u)\exp(-vx)
	{\rm d}F_Y(v)
	\right]
	\exp(-ux)
	{\rm d}F_X(u)
+
	\int_{0}^{\infty}
	\left[
	\int_{0}^{u}
	(u-v)\exp(-vx)
	{\rm d}F_Y(v)
	\right]
	\exp(-ux)
	{\rm d}F_X(u)
		\\[0,2cm]
	&=
	2
	\int_{0}^{\infty}
\left[
\int_{0}^{u}
\exp(-vx)
{\rm d}F_Y(v)
\right]
u
\exp(-ux)
{\rm d}F_X(u)
-
	2
\int_{0}^{\infty}
\left[
\int_{0}^{u}
v\exp(-vx)
{\rm d}F_Y(v)
\right]
\exp(-ux)
{\rm d}F_X(u),
\end{align*}
the identity in \eqref{id-first} becomes
\begin{align}
&		\mathbb{E}\left[
\vert X-Y\vert 
\exp\left\{-(X+Y) x\right\} 
\right]
=
\mathbb{E}\left[\exp(-Xx)\right]\mathbb{E}\left[Y\exp(-Yx)\right]
-
\mathbb{E}\left[X\exp(-Xx)\right]\mathbb{E}\left[\exp(-Yx)\right]
\nonumber
\\[0,2cm]
&+
	2
\int_{0}^{\infty}
\left[
\int_{0}^{u}
\exp(-vx)
{\rm d}F_Y(v)
\right]
u
\exp(-ux)
{\rm d}F_X(u)
-
2
\int_{0}^{\infty}
\left[
\int_{0}^{u}
v\exp(-vx)
{\rm d}F_Y(v)
\right]
\exp(-ux)
{\rm d}F_X(u),
\label{key-id}
\end{align}
where
$X\sim\text{Gamma}(\alpha_1,\lambda_1)$ and $Y\sim\text{Gamma}(\alpha_2,\lambda_2)$.

By using results of Propositions \ref{prop-a-1} and \ref{prop-a-2} in \eqref{key-id}, it follows that
\begin{align}
	&\mathbb{E}\left[
	\vert X-Y\vert 
	\exp\left\{-(X+Y) x\right\} 
	\right]
	=
	{\alpha_2\lambda_1^{\alpha_1}\lambda_2^{\alpha_2}\over(x+\lambda_1)^{\alpha_1}(x+\lambda_2)^{\alpha_2+1}}
	-
	{\alpha_1\lambda_1^{\alpha_1}\lambda_2^{\alpha_2}\over (x+\lambda_1)^{\alpha_1+1}(x+\lambda_2)^{\alpha_2}} 
	\nonumber
	\\[0,2cm]
	&+
	{2\lambda_1^{\alpha_1}\lambda_2^{\alpha_2}\over \text{B}(\alpha_1,\alpha_2+1)} \,
	{\,_{2}F_{1}\left(\alpha_1+\alpha_2+1,1;\alpha_2+1;{x+\lambda_2\over 2x+\lambda_1+\lambda_2}\right)\over (2x+\lambda_1+\lambda_2)^{\alpha_1+\alpha_2+1}}
	\nonumber
	\\[0,2cm]
	&-
	{2\lambda_1^{\alpha_1}\lambda_2^{\alpha_2}\over \text{B}(\alpha_1,\alpha_2)} \,
	{\,_{2}F_{1}\left(\alpha_1+\alpha_2,1;\alpha_2+1;{x+\lambda_2\over 2x+\lambda_1+\lambda_2}\right)\over(x+\lambda_2)(2x+\lambda_1+\lambda_2)^{\alpha_1+\alpha_2}}
+
	{2\lambda_1^{\alpha_1}\lambda_2^{\alpha_2}\over\text{B}(\alpha_1,\alpha_2)
	} \,
	{1\over (x+\lambda_2)(2x+\lambda_1+\lambda_2)^{\alpha_1+\alpha_2}}.
	\label{iidd-int}
\end{align}
By replacing \eqref{iidd-int} in \eqref{eq-1}, we get
\begin{align}
\mathbb{E}(W)
	&=
	\alpha_2\lambda_1^{\alpha_1}\lambda_2^{\alpha_2}
		\int_{0}^{\infty}
{1\over(x+\lambda_1)^{\alpha_1}(x+\lambda_2)^{\alpha_2+1}}
{\rm d}x
-
\alpha_1\lambda_1^{\alpha_1}\lambda_2^{\alpha_2}
	\int_{0}^{\infty}
{1\over (x+\lambda_1)^{\alpha_1+1}(x+\lambda_2)^{\alpha_2}} 
{\rm d}x
\nonumber
\\[0,2cm]
&+
{2\lambda_1^{\alpha_1}\lambda_2^{\alpha_2}\over \text{B}(\alpha_1,\alpha_2+1)} 
	\int_{0}^{\infty}
{\,_{2}F_{1}\left(\alpha_1+\alpha_2+1,1;\alpha_2+1;{x+\lambda_2\over 2x+\lambda_1+\lambda_2}\right)\over (2x+\lambda_1+\lambda_2)^{\alpha_1+\alpha_2+1}}
{\rm d}x
\nonumber
\\[0,2cm]
&-
{2\lambda_1^{\alpha_1}\lambda_2^{\alpha_2}\over \text{B}(\alpha_1,\alpha_2)} 
	\int_{0}^{\infty}
{\,_{2}F_{1}\left(\alpha_1+\alpha_2,1;\alpha_2+1;{x+\lambda_2\over 2x+\lambda_1+\lambda_2}\right)\over(x+\lambda_2)(2x+\lambda_1+\lambda_2)^{\alpha_1+\alpha_2}}
{\rm d}x
+
{2\lambda_1^{\alpha_1}\lambda_2^{\alpha_2}\over\text{B}(\alpha_1,\alpha_2)
} 
	\int_{0}^{\infty}
{1\over (x+\lambda_2)(2x+\lambda_1+\lambda_2)^{\alpha_1+\alpha_2}}
{\rm d}x.
\label{exp-W-expp}
\end{align}

By using Proposition \ref{main-prop}, the expectation of $W$ in \eqref{exp-W-expp} can be written as
{\small
\begin{align}
	\mathbb{E}(W)
	&=
{\alpha_2\lambda_2^{\alpha_2}\over \lambda_1^{\alpha_2}(\alpha_1+\alpha_2)}
\, _2F_1\left(\alpha_2+1,\alpha_1+\alpha_2;\alpha_1+\alpha_2+1;1-{\lambda_2\over \lambda_1}\right)
	-
{\alpha_1\lambda_2^{\alpha_2}\over \lambda_1^{\alpha_2}(\alpha_1+\alpha_2)}
\, _2F_1\left(\alpha_2,\alpha_1+\alpha_2;\alpha_1+\alpha_2+1;1-{\lambda_2\over \lambda_1}\right)
	\nonumber
	\\[0,2cm]
	&+
	{2\lambda_1^{\alpha_1}\lambda_2^{\alpha_2}\over \text{B}(\alpha_1,\alpha_2+1)} 
	\int_{0}^{\infty}
	{\,_{2}F_{1}\left(\alpha_1+\alpha_2+1,1;\alpha_2+1;{x+\lambda_2\over 2x+\lambda_1+\lambda_2}\right)\over (2x+\lambda_1+\lambda_2)^{\alpha_1+\alpha_2+1}}
	{\rm d}x
	-
	{2\lambda_1^{\alpha_1}\lambda_2^{\alpha_2}\over \text{B}(\alpha_1,\alpha_2)} 
	\int_{0}^{\infty}
	{\,_{2}F_{1}\left(\alpha_1+\alpha_2,1;\alpha_2+1;{x+\lambda_2\over 2x+\lambda_1+\lambda_2}\right)\over(x+\lambda_2)(2x+\lambda_1+\lambda_2)^{\alpha_1+\alpha_2}}
	{\rm d}x
	\nonumber
	\\[0,2cm]
	&+
{\lambda_1^{\alpha_1}\over 2^{\alpha_1+\alpha_2-1}\lambda_2^{\alpha_1} (\alpha_1+\alpha_2) \text{B}(\alpha_1,\alpha_2)}
\, _2F_1\left(\alpha_1+\alpha_2,\alpha_1+\alpha_2;\alpha_1+\alpha_2+1;1-{\lambda_1+\lambda_2\over 2\lambda_2}\right),
	\label{exp-W-expp-1}
\end{align}
}\noindent
provided $1/3<\lambda_2/\lambda_1<2$.

\begin{remark}
It is noteworthy that, for $\lambda_1\neq\lambda_2$, the integrals in \eqref{exp-W-expp-1}
lack closed-form representations in terms of standard special functions, but can be efficiently computed numerically. For the special case $\lambda_1=\lambda_2$, Remark \ref{remark-closed-ep-exp} provides a straightforward closed-form expression for the expected value of $W$.
\end{remark}

\begin{remark}\label{remark-closed-ep-exp}	
	Setting  $\lambda_1=\lambda_2$ in \eqref{exp-W-expp-1} and using Proposition \ref{main-prop}, we obtain.
	\begin{multline*}	
		\mathbb{E}(W)
		=
	{\alpha_2-\alpha_1\over\alpha_1+\alpha_2}
	+
	{1\over 2^{\alpha_1+\alpha_2-1} (\alpha_1+\alpha_2)}
		\\[0,2cm]
		\times 
			{1\over \text{B}(\alpha_1,\alpha_2)}
\left[		
1
	+
	{\alpha_1+\alpha_2\over 2 \alpha_2} 
	\,_{2}F_{1}\left(\alpha_1+\alpha_2+1,1;\alpha_2+1;{1\over 2}\right)
	\right. 
	-
	\left.
	\,_{2}F_{1}\left(\alpha_1+\alpha_2,1;\alpha_2+1;{1\over 2}\right)
	\right],
	\end{multline*}	
	because $\, _2F_1(a,b;c;0)=1$ and $\text{B}(\alpha_1,\alpha_2+1)=\alpha_2 \text{B}(\alpha_1,\alpha_2)/(\alpha_1+\alpha_2)$. If furthermore, $\alpha_1=\alpha_2\equiv\alpha$, the above expression reduces to
		\begin{align}\label{id-1-0}
		\mathbb{E}(W)
		=
		{\Gamma(2\alpha)\over 2^{2\alpha}\alpha \Gamma^2(\alpha)}
		\left[		
		1
		+
		\,_{2}F_{1}\left(2\alpha+1,1;\alpha+1;{1\over 2}\right)
		\right. 
		-
		\left.
		\,_{2}F_{1}\left(2\alpha,1;\alpha+1;{1\over 2}\right)
		\right].
	\end{align}
	
	By applying the identity \citep{WolframResearch2024}:
		\begin{align*}
		\,_2F_1\left(a,b;{a+b\over 2};{1\over 2}\right)
		=
		\sqrt{\pi}\Gamma\left({a+b\over 2}\right)
		\left[
		{1\over \Gamma({a+1\over 2})\Gamma({b\over 2})}
		+
		{1\over \Gamma({a\over 2})\Gamma({b+1\over 2})}
		\right],
	\end{align*}
we have
\begin{align}\label{id-1-1}
	\,_{2}F_{1}\left(2\alpha+1,1;\alpha+1;{1\over 2}\right)
	=
	1
	+
	{\sqrt{\pi}\Gamma\left(\alpha+1\right)\over \Gamma({2\alpha+1\over 2})}.
\end{align}
	
	Furthermore, by using the identity \citep{WolframResearch2024}:
	\begin{align*}
	\,_2F_1\left(a,b;{a+b+1\over 2};{1\over 2}\right)
	=
	{\sqrt{\pi}\Gamma({a+b+1\over 2})\over\Gamma({a+1\over 2})\Gamma({b+1\over 2})},
\end{align*}
	we get
	\begin{align}\label{id-1-2}
	\,_{2}F_{1}\left(2\alpha,1;\alpha+1;{1\over 2}\right)
	=
	{\sqrt{\pi}\Gamma(\alpha+1)\over\Gamma({2\alpha+1\over 2})}.
	\end{align}
	
	Replacing  \eqref{id-1-1} and \eqref{id-1-2} in \eqref{id-1-0}, yields
	\begin{align*}
	\mathbb{E}(W)
	=
	{2\Gamma(2\alpha)\over 2^{2\alpha}\alpha \Gamma^2(\alpha)}
	=
	{\Gamma(2\alpha+1)\over \alpha^2  2^{2\alpha} \Gamma^2(\alpha)},
	\end{align*}
	where in the last step we have used the well-known identity $\Gamma(x+1)=x\Gamma(x)$. By applying Legendre duplication formula  \citep{Abramowitz1972}:
	$
	\Gamma(x)\Gamma\left(x+{1/ 2}\right)=2^{1-2x}\sqrt{\pi}\Gamma(2x),
	$ in the above identity,
	we get 
	\begin{align*}
	\mathbb{E}(W)
		=
	{\Gamma(\alpha+{1\over 2})\over\sqrt{\pi}\alpha\Gamma(\alpha)},
	\end{align*}
which corresponds to the Gini coefficient, $G$, formula 
for gamma distributions given in \cite{McDonald1979}.
\end{remark}

\subsection{The $n$th raw moment of $W$: case $r\in\mathbb{N}$ and $\lambda_1=\lambda_2$
}\label{rw-m2}

By using Proposition \ref{Moments-truncated}, with $\varepsilon=0$ and $\delta=\infty$, $\mathbb{E}(W^n)$ can be written as
\begin{align*}
	\mathbb{E}(W^n)
	&=
	n
	\int_0^1 w^{n-1} [1-F_W(w)]{\rm d}w
=
	n
	\int_0^1 w^{n-1} \left[1-	\int_{1-w_r}^{w_r} f_{X\over X+Y}(u){\rm d}u\right]{\rm d}w	
	\\[0,2cm]
	&=
	4rn
	\int_{1/2}^1
	(2y-1)^{2rn-1}
	\left[1-\int_{1-y}^{y} f_{X\over X+Y}(u){\rm d}u\right]{\rm d}y,
\end{align*}
where in the second equality the formula \eqref{exp-f-w} for $F_W$ was used, and in the third equality the change of variables $y=w_r={(1+w^{\!^{1/2r}})/ 2}$.
Let $r\in\mathbb{N}$. By the Binomial Theorem, the above expression becomes
\begin{align*}
	4rn
	\sum_{k=0}^{2rn-1}
	\binom{2rn-1}{k}
	2^k(-1)^{2rn-1-k}
\xi(k),
\end{align*}
where 
$\xi(p)
\equiv 	
\int_{1/2}^1
y^p
[1-\int_{1-y}^{y} f_{X\over X+Y}(u){\rm d}u]{\rm d}y$, as defined in \eqref{def-xi}.
As $\lambda_1=\lambda_2$, by using Proposition \ref{prop-app-main},  the $n$th raw moment of $W$ can be written as
\begin{multline*}
	\mathbb{E}(W^n)
	=
	4rn
	\sum_{k=0}^{2rn-1}
	\binom{2rn-1}{k}
	{2^k(-1)^{2rn-1-k}\over k+1}
	\left\{
	{\mathrm {B}(\alpha_1+k+1,\alpha_2)\over \mathrm {B}(\alpha_1,\alpha_2)}
	\left[
	1-I_{1/2}(\alpha_1+k+1,\alpha_2)
	\right]
	\right.
	\\[0,2cm]
	+\left.
	{\mathrm {B}(\alpha_1,\alpha_2+k+1)\over \mathrm {B}(\alpha_1,\alpha_2)} \,
	I_{1/2}(\alpha_1,\alpha_2+k+1)
	-
	\left({1\over 2}\right)^{k+1}
	\right\},
\end{multline*}
provided $r\in\mathbb{N}$.  In the above, $I_x(a,b)={\rm B}(x;a,b)/{\rm B}(a,b)$ denotes the regularized beta function, with ${\rm B}(x;a,b)$ being the  incomplete beta function.

\subsection{The $1$st raw moment of $Z$: case $r=1/2$ and $\lambda_2/\lambda_1<2$
}\label{expW-l}

Under the condition $r=1/2$, in this subsection (more specifically, in \eqref{exp-id-Z} and Remark \ref{rem-expect-Z-lambdas-equal}), we obtain analytical formulas for
the $1$st raw moment (expected value) of 
\begin{align*}
	Z=1-{\sqrt{XY}\over {1\over 2}(X+Y)},
\end{align*}
where $X\sim\text{Gamma}(\alpha_1,\lambda_1)$ and $Y\sim\text{Gamma}(\alpha_2,\lambda_2)$ are independent.

\smallskip
Indeed, by using the identity \eqref{id-pdf-exp} with $z=1/[(X+Y)/2]$, we have
\begin{align*}
	\mathbb{E}(Z)
	=
	1-
	\mathbb{E}\left[{\sqrt{XY}\over {1\over 2}(X+Y)}\right]
	=
	1-
	\mathbb{E}\left[
	\sqrt{XY}
	\int_{0}^{\infty}\exp\left\{-\left({X+Y\over 2}\right) x\right\} 
	{\rm d}x
	\right]
=
	1-
		\int_{0}^{\infty}
	\mathbb{E}\left[	\sqrt{XY}
	\exp\left\{-\left({X+Y\over 2}\right)x\right\}\right]
	{\rm d}x,
\end{align*}
where we used Tonelli's theorem \citep{Folland1999} in the last step to validate the change in integration order. By using the independence of $X$ and $Y$ the above expression becomes
\begin{align}\label{id-1}
		\mathbb{E}(Z)
	=
		1-
	\int_{0}^{\infty}
	\mathbb{E}\left[	\sqrt{X}
	\exp\left\{-{x\over 2} \, X\right\}\right]
		\mathbb{E}\left[	\sqrt{Y}
	\exp\left\{-{x\over 2} \,Y\right\}\right]
	{\rm d}x.
\end{align}

Furthermore, as $X\sim\text{Gamma}(\alpha_1,\lambda_1)$, observe that
\begin{align*}
		\mathbb{E}\left[	\sqrt{X}
	\exp\left\{-{x\over 2} \, X\right\}\right]
	=
	{\lambda_1^{\alpha_1}\over\Gamma(\alpha_1)} \, \int_0^\infty 
	u^{\alpha_1+{1\over 2}-1}
	\exp\left\{-\left({x\over 2}+\lambda_1\right) u\right\}
	{\rm d}u
	=
	{\lambda_1^{\alpha_1}\over\Gamma(\alpha_1)} \,
	{\Gamma(\alpha_1+{1\over 2})\over \left({x\over 2}+\lambda_1\right)^{\alpha_1+{1\over 2}}}.
\end{align*}
Analogously,  as $Y\sim\text{Gamma}(\alpha_2,\lambda_2)$,
\begin{align*}
	\mathbb{E}\left[	\sqrt{Y}
	\exp\left\{-{x\over 2} \, Y\right\}\right]
	=
	{\lambda_2^{\alpha_2}\over\Gamma(\alpha_2)} \,
	{\Gamma(\alpha_2+{1\over 2})\over \left({x\over 2}+\lambda_2\right)^{\alpha_2+{1\over 2}}}.
\end{align*}

Replacing the last two relations in \eqref{id-1}, we get
\begin{align}
	\mathbb{E}(Z)
	&=
	1-
	{\lambda_1^{\alpha_1}\lambda_2^{\alpha_2} \Gamma(\alpha_1+{1\over 2})\Gamma(\alpha_2+{1\over 2})
		\over
	\Gamma(\alpha_1)\Gamma(\alpha_2)} \,
	\int_{0}^{\infty}
{1
	\over 
\left({x\over 2}+\lambda_1\right)^{\alpha_1+{1\over 2}}
\left({x\over 2}+\lambda_2\right)^{\alpha_2+{1\over 2}}
}
	{\rm d}x
	\nonumber
	\\[0,2cm]
	&=
		1-
	{2\lambda_2^{\alpha_2} \Gamma(\alpha_1+{1\over 2})\Gamma(\alpha_2+{1\over 2})
		\over
		\lambda_1^{\alpha_2}(\alpha_1+\alpha_2)
		\Gamma(\alpha_1)\Gamma(\alpha_2)}  \,
	\, _2F_1\left(\alpha_2+{1\over 2},\alpha_1+\alpha_2;\alpha_1+\alpha_2+1;1-{\lambda_2\over\lambda_1}\right),
\label{exp-id-Z}
\end{align}
whenever $\lambda_2/\lambda_1<2$,
where in the final equality, Proposition \ref{main-prop} was applied.

\begin{remark}\label{rem-expect-Z-lambdas-equal}
	From \eqref{exp-id-Z}, for $\lambda_1=\lambda_2$, we have
\begin{align*}	
	\mathbb{E}(Z)
=
1-
{2\Gamma(\alpha_1+{1\over 2})\Gamma(\alpha_2+{1\over 2})
	\over
	(\alpha_1+\alpha_2)
	\Gamma(\alpha_1)\Gamma(\alpha_2)},
\end{align*}	
	because $\, _2F_1(a,b;c;0)=1$.
\end{remark}

\subsection{The $n$th raw moment of $Z$: case $r\in\mathbb{N}$ and $\lambda_1=\lambda_2$
}\label{rwm-3}

Applying Proposition \ref{Moments-truncated}, with $\varepsilon = 0$ and $\delta = \infty$, $\mathbb{E}(Z^n)$ becomes
\begin{align*}
	\mathbb{E}(Z^n)
	&=
	n
	\int_0^1 z^{n-1} [1-F_Z(z)]{\rm d}z
=
	n
	\int_0^1 z^{n-1} 
	\left[1-	\int_{1-z_r}^{z_r} f_{X\over X+Y}(u){\rm d}u\right] 
	{\rm d}z
	\\[0,2cm]
	&=
	4rn
	\int_{1/2}^1
	[1-2^r(2y^2-y+1)^r]^{n-1}
	\left[1-\int_{1-y}^{y} 
	f_{X\over X+Y}(u) 
	{\rm d}u\right]{\rm d}y,
\end{align*}
where in the second equality the formula \eqref{def-z-r} for $F_Z$ was used, and in the third equality the change of variables $y=z_r= {[1+\sqrt{1-(1-z)^{1/r}}]/2}$.
As $r\in\mathbb{N}$, by the multinomial Theorem the above expression is written as
\begin{align*}
	4rn
	\sum_{k=0}^{n-1}
	\binom{n-1}{k}
	(-1)^k 2^{rk}
	\sum_{i+j+l=rk}
	\binom{rk}{i,j,l}
	2^i(-1)^j
	\xi(2i+j),
\end{align*}
where $\xi(\cdot)$ is as considered in Subsection \ref{rw-m2} and 
$
{n \choose k_{1},k_{2},\ldots ,k_{m}}={n!}/(k_{1}!\,k_{2}!\cdots k_{m}!)$
is a multinomial coefficient.
As $\lambda_1=\lambda_2$, by using Proposition \ref{prop-app-main},  the $n$th raw moment of $Z$ can be written as
\begin{multline*}
	\mathbb{E}(Z^n)
	=
	4rn
\sum_{k=0}^{n-1}
\binom{n-1}{k}
\sum_{i+j+l=rk}
\binom{rk}{i,j,l}
{2^{i+rk} (-1)^{j+k}\over 2i+j}
\Bigg\{
{\mathrm {B}(\alpha_1+2i+j+1,\alpha_2)\over \mathrm {B}(\alpha_1,\alpha_2)}
\left[
1-I_{1/2}(\alpha_1+2i+j+1,\alpha_2)
\right]
\\[0,2cm]
+
{\mathrm {B}(\alpha_1,\alpha_2+2i+j+1)\over \mathrm {B}(\alpha_1,\alpha_2)} \,
I_{1/2}(\alpha_1,\alpha_2+2i+j+1)
-
\left({1\over 2}\right)^{2i+j+1}
\Bigg\},
\end{multline*}
provided $r\in\mathbb{N}$. 
Here, $I_x(a,b)$ denotes the regularized beta function.

\section{Maximum likelihood estimation}\label{sec:07}

Let $\{W_i:i = 1,\ldots, n\}$ and $\{Z_i:i = 1,\ldots, n\}$ be univariate random samples of size $n$ from $W$ and $Z$ with density functions as given in \eqref{PDF-W-gammas} and \eqref{PDF-Z-gammas}, respectively, and let $w_i$ and $z_i$ be the sample observations of $W_i$ and $Z_i$, respectively. Then, the respective log-likelihood functions for $\boldsymbol{\theta}=(\alpha_1,\alpha_2,\lambda_1,\lambda_2,r)^\top$  are given by
\begin{align}\label{PDF-W-gammas-log}
	\ell(\boldsymbol{\theta};\boldsymbol{w})
	&=
	-
	\log(4)
	-
	\log(r)
	-{\alpha_2}\log(\lambda_1)
	-{\alpha_1}\log(\lambda_2)
	-\log(\text{B}(\alpha_1,\alpha_2))
	+
	\left({{1\over 2 r}-1}\right)\sum_{i=1}^{n}\log(w_i)
	+
	\sum_{i=1}^{n}
	\log(G(w_{r,i})),
\end{align}
and
\begin{align}\label{PDF-Z-gammas-log}
	\ell(\boldsymbol{\theta};\boldsymbol{z})
	&=
	-
	\log(4)
	-
	\log(r)
	-{\alpha_2}\log(\lambda_1)
	-{\alpha_1}\log(\lambda_2)
	-\log(\text{B}(\alpha_1,\alpha_2))
		\nonumber
	\\[0,2cm]
	&
	+
	\left({{1\over r}-1}\right)\sum_{i=1}^{n}\log(1 - z_i)
	-
	{1\over 2}
	\sum_{i=1}^{n}
	\log\left({1 - (1 - z_i)^{1\over r}}\right)
	+
	\sum_{i=1}^{n}
	\log(G(z_{r,i})),
\end{align}
where
\begin{align}\label{def-wri}
	w_{r,i}={1+w_i^{\!^{1\over 2r}}\over 2},
	\quad 
	z_{r,i}={1+\sqrt{1-(1-z_i)^{1\over r}}\over 2},
	\quad 
	i=1,\ldots,n,
\end{align}
and $G(\cdot)$ being as in \eqref{def-G}. The necessary conditions for the occurrence
of a maximum (or a minimum) of $\ell(\boldsymbol{\theta};\boldsymbol{w})$ and $\ell(\boldsymbol{\theta};\boldsymbol{z})$  are
\begin{align}\label{lik-eq-w}
{\partial \ell(\boldsymbol{\theta};\boldsymbol{w})\over\partial\alpha_i}
=0,
\quad 
{\partial \ell(\boldsymbol{\theta};\boldsymbol{w})\over\partial\lambda_i}
=0,
\quad 
{\partial \ell(\boldsymbol{\theta};\boldsymbol{w})\over\partial r}
=0,
	\quad i=1,2,
\end{align}
and 
\begin{align}\label{lik-eq-z}
	{\partial \ell(\boldsymbol{\theta};\boldsymbol{z})\over\partial\alpha_i}
	=0,
	\quad 
	{\partial \ell(\boldsymbol{\theta};\boldsymbol{z})\over\partial\lambda_i}
	=0,
	\quad 
	{\partial \ell(\boldsymbol{\theta};\boldsymbol{z})\over\partial r}
	=0,
	\quad i=1,2,
\end{align}
respectively. The equations \eqref{lik-eq-w} and \eqref{lik-eq-z} are known as the likelihood equations. It is simple to observe that (for $i=1,2$)
\begin{align*}
	&{\partial \ell(\boldsymbol{\theta};\boldsymbol{w})\over\partial\alpha_1}
	=
	-
	\log(\lambda_2)-[\psi(\alpha_1)-\psi(\alpha_1+\alpha_2)]
		+
	\sum_{i=1}^{n}
	{{\partial G(w_{r,i})\over \partial\alpha_1}\over G(w_{r,i})},
	\\[0,2cm] 
		&{\partial \ell(\boldsymbol{\theta};\boldsymbol{w})\over\partial\alpha_2}
	=
	-
	\log(\lambda_1)-[\psi(\alpha_2)-\psi(\alpha_1+\alpha_2)]
	+
	\sum_{i=1}^{n}
	{{\partial G(w_{r,i})\over \partial\alpha_2}\over G(w_{r,i})},
	\\[0,2cm] 
	&{\partial \ell(\boldsymbol{\theta};\boldsymbol{w})\over\partial\lambda_1}
	=
	-
	{\alpha_2\over\lambda_1}+	\sum_{i=1}^{n}
	{{\partial G(w_{r,i})\over \partial\lambda_1}\over G(w_{r,i})},
	\\[0,2cm] 
		&{\partial \ell(\boldsymbol{\theta};\boldsymbol{w})\over\partial\lambda_2}
	=
		-
	{\alpha_1\over\lambda_2}+	\sum_{i=1}^{n}
	{{\partial G(w_{r,i})\over \partial\lambda_2}\over G(w_{r,i})},
	\\[0,2cm] 
	&{\partial \ell(\boldsymbol{\theta};\boldsymbol{w})\over\partial r}
	=
	-
	{1\over r}
	-
	{1\over 2r^2}
	\sum_{i=1}^{n}
	\log(w_i)	
	+	
	\sum_{i=1}^{n}
	{G'(w_{r,i})\over G(w_{r,i})} \, {\partial w_{r,i}\over \partial r},
\end{align*}
and
\begin{align*}
	&{\partial \ell(\boldsymbol{\theta};\boldsymbol{z})\over\partial\alpha_1}
	=
	-
	\log(\lambda_2)-[\psi(\alpha_1)-\psi(\alpha_1+\alpha_2)]
	+
	\sum_{i=1}^{n}
	{{\partial G(z_{r,i})\over \partial\alpha_1}\over G(z_{r,i})},
	\\[0,2cm] 
	&{\partial \ell(\boldsymbol{\theta};\boldsymbol{z})\over\partial\alpha_2}
	=
	-
	\log(\lambda_1)-[\psi(\alpha_2)-\psi(\alpha_1+\alpha_2)]
	+
	\sum_{i=1}^{n}
	{{\partial G(z_{r,i})\over \partial\alpha_2}\over G(z_{r,i})},
	\\[0,2cm] 
	&{\partial \ell(\boldsymbol{\theta};\boldsymbol{z})\over\partial\lambda_1}
	=
	-
	{\alpha_2\over\lambda_1}+	\sum_{i=1}^{n}
	{{\partial G(z_{r,i})\over \partial\lambda_1}\over G(z_{r,i})},
	\\[0,2cm] 
	&{\partial \ell(\boldsymbol{\theta};\boldsymbol{z})\over\partial\lambda_2}
	=
	-
	{\alpha_1\over\lambda_2}+	\sum_{i=1}^{n}
	{{\partial G(z_{r,i})\over \partial\lambda_2}\over G(z_{r,i})},
	\\[0,2cm] 
	&{\partial \ell(\boldsymbol{\theta};\boldsymbol{z})\over\partial r}
	=
	-
	{1\over r}
	-
	{1\over r^2}
	\sum_{i=1}^{n}
	\log(1-z_i)	
	-
	{1\over 2r^2}
		\sum_{i=1}^{n}
		{(1-z_i)^{1\over r}\log(1-z_i)\over 1-(1-z_i)^{1\over r}}
	+	
	\sum_{i=1}^{n}
	{G'(z_{r,i})\over G(z_{r,i})} \, {\partial z_{r,i}\over \partial r},
\end{align*}
where $G'(x)$ denotes the derivative of $G(x)$, given in  \eqref{def-G}, with respect to $x$, and ${\partial w_{r,i}/ \partial r}$ and ${\partial z_{r,i}/ \partial r}$ are the partial partial derivatives of $w_{r,i}$ and $z_{r,i}$ given in \eqref{def-wri}. In the above, $\psi(x)=\Gamma'(x)/\Gamma(x)$ denotes the digamma function.

Any nontrivial root $\widehat{\boldsymbol{\theta}}=(\widehat{\alpha}_1,\widehat{\alpha}_2,\widehat{\lambda}_1,\widehat{\lambda}_2,\widehat{r})^\top$ of the above likelihood equations that provides the absolute maximum of the log-likelihood function is called an maximum likelihood (ML) estimate. Due to the complexity of the maximization problem, numerical optimization methods are required to obtain the ML estimate.

\section{Simulation Results} \label{simulation}

\subsection{Monte Carlo Simulation}

The assessment of the estimators via Monte Carlo simulation is based on classical performance measures, especially bias, variance, and mean squared error. For an estimator $\widehat{\theta}$ of a parameter $\theta$, as presented in \citep[p.~407--408]{gentle2009computational}, the bias is defined as
$
\mathrm{bias}(\widehat{\theta})
= \mathbb{E}[\widehat{\theta}] - \theta,
$
and the variance as
$
\mathrm{Var}(\widehat{\theta})
=
\mathbb{E}[(\widehat{\theta}-\mathbb{E}[\widehat{\theta}])^2
].
$

The mean squared error (MSE) is given by
$
\mathrm{MSE}(\widehat{\theta})
=
\mathbb{E}[
(\widehat{\theta}-\theta)^2
],
$
and satisfies the classical decomposition in \citep[p.~268--269]{casella2002statistical}:
$
\mathrm{MSE}(\widehat{\theta})
=
\mathrm{Var}(\widehat{\theta})
+
(\mathrm{bias}(\widehat{\theta}))^2.
$

The root mean squared error (RMSE) is defined as
$
\mathrm{RMSE}(\widehat{\theta})
=
\sqrt{\mathrm{MSE}(\widehat{\theta})}.
$

In the Monte Carlo setting, the expectations involved in the expressions above are approximated by sample averages over $B$ independent replications. Denoting by $\widehat{\theta}^{(b)}$ the estimator obtained from replication $b$, the following empirical versions are used:
\[
\widehat{\mathrm{bias}}
=
\frac{1}{B}\sum_{b=1}^{B} \widehat{\theta}^{(b)} - \theta,
\ \
\widehat{\mathrm{Var}}
=
\frac{1}{B-1}
\sum_{b=1}^{B}
\left(
\widehat{\theta}^{(b)} -
\overline{\widehat{\theta}}
\right)^{2},
\ \
\overline{\widehat{\theta}} =
\frac{1}{B}\sum_{b=1}^{B}\widehat{\theta}^{(b)},
\ \
\widehat{\mathrm{MSE}}
=
\frac{1}{B}
\sum_{b=1}^{B}
(\widehat{\theta}^{(b)} - \theta)^2,
\ \
\widehat{\mathrm{RMSE}} = \sqrt{\widehat{\mathrm{MSE}}}.
\] 

The estimation step is central to evaluate parameter stability, sensitivity to different generating configurations, and the role of the transformation parameter $r$. Since the densities $f_W(w)$ and $f_Z(z)$, given in \eqref{PDF-W-gammas} and \eqref{PDF-Z-gammas}, have a complex structure, implementing maximum likelihood estimation requires numerical methods. In this context, Monte Carlo simulation becomes essential to investigate patterns of stability, sensitivity, and convergence of the estimators.

This section provides a synthesis of the results obtained for models $W$ and $Z$, defined in \eqref{main-rv-1}. The analysis highlights representative scenarios, the influence of the structural parameters, and the behavior of the estimators as the sample size increases, allowing one to identify overall performance trends.

Additional scenarios in which the scale parameters are kept fixed, that is, $\lambda_1=\lambda_2=1$, are reported in the Appendix \ref{ap:simulacao}. This separation makes it possible to isolate the effect of the shape parameters and of $r$ on the estimators' performance.

\paragraph*{General simulation setup}

The simulations were based on independent samples of $W$ and $Z$ generated from \eqref{main-rv-1}, with
$
X \sim \text{Gamma}(\alpha_1,\lambda_1)
$ \text{and} $
Y \sim \text{Gamma}(\alpha_2,\lambda_2),
$
as specified in \eqref{formula-pdf-gammas}. In all experiments, the sample sizes
$
n \in \{50, 100, 250, 500\},
$
and transformation parameters
$
r \in \{0.5, 1, 2, 4\}
$
were considered.

To ensure a representative analysis, four combinations of the shape parameters were selected, corresponding to different configurations of $(\alpha_1, \alpha_2)$:
$
(\alpha_1,\alpha_2) \in \{(0.5,0.5), (0.8,1.0), (1.5,1.3), (2.0,2.0)\}.
$

These choices allow the evaluation of a low-shape symmetric case, a moderately asymmetric case, an intermediate asymmetric case, and a high-shape symmetric case. For both models, the five parameters in the vector
$
\boldsymbol{\theta} = (\alpha_1,\alpha_2,\lambda_1,\lambda_2,r)^\top
$
were estimated simultaneously by maximum likelihood, using the L-BFGS-B algorithm with numerical constraints to avoid invalid values during the optimization process. The log-likelihood function in each case follows directly from the densities in \eqref{PDF-W-gammas} and \eqref{PDF-Z-gammas}, and can be written as in \eqref{PDF-W-gammas-log} and \eqref{PDF-Z-gammas-log}, with $w_{r,i}$ and $z_{r,i}$ defined in \eqref{def-wri} and $G(\cdot)$ as in \eqref{def-G}.

Each scenario was replicated 500 times, enabling the computation of bias, standard deviation, and RMSE for each estimator, based on the numerical solutions of the likelihood equations in \eqref{lik-eq-w} and \eqref{lik-eq-z}.

\subsection{Results for model $W$}

Figures~\ref{fig:W-bias-rmse-0505}--\ref{fig:W-bias-rmse-20-20} display, respectively, the bias and RMSE plots of the estimators under model $W$ for the four scenarios considered: $(0.5,0.5)$, $(0.8,1.0)$, $(1.5,1.3)$, and $(2.0,2.0)$. Overall, model $W$ exhibited globally stable performance. Across all scenarios, the estimators $\widehat{\alpha}_1$ and $\widehat{\alpha}_2$ showed progressive convergence as $n$ increased, with decreasing biases and a consistent reduction in RMSE. Even in more challenging settings, such as $(0.5,0.5)$---which, despite being symmetric, is particularly demanding because $\alpha_1=\alpha_2<1$, concentrating probability mass near the boundaries and increasing estimation variability---larger sample sizes substantially reduced the variability of the estimates. Nevertheless, in asymmetric scenarios such as $(0.8,1.0)$ and $(1.5,1.3)$, RMSEs remained relatively high even at $n=500$.

Estimation of the scale parameters $\lambda_1$ and $\lambda_2$ generally performed more favorably than that of the shape parameters. Biases remained relatively small, and RMSEs decreased systematically as $n$ increased. In the low-shape symmetric scenario $(0.5,0.5)$, recovery of the scale parameters is essentially exact, with RMSEs very close to zero for all values of $r$. In the remaining scenarios, including $(0.8,1.0)$, $(1.5,1.3)$, and $(2.0,2.0)$, RMSEs for the scales stayed at a moderate level but were still below those observed for the shape parameters, reinforcing the relatively higher stability of $\widehat{\lambda}_1$ and $\widehat{\lambda}_2$.

Estimation of the transformation parameter $r$ exhibited higher sensitivity, especially in low-shape and strongly asymmetric scenarios. As $n$ increases, however, a clear stabilization of the estimator is observed, with a gradual decline in RMSE. In the symmetric scenario $(2.0,2.0)$, for instance, the error decreases consistently and reaches small values at $n=500$, highlighting the expected asymptotic gain.

\begin{figure}[H]
	\centering
	
	\begin{subfigure}{0.48\linewidth}
		\centering
		\includegraphics[width=\linewidth]{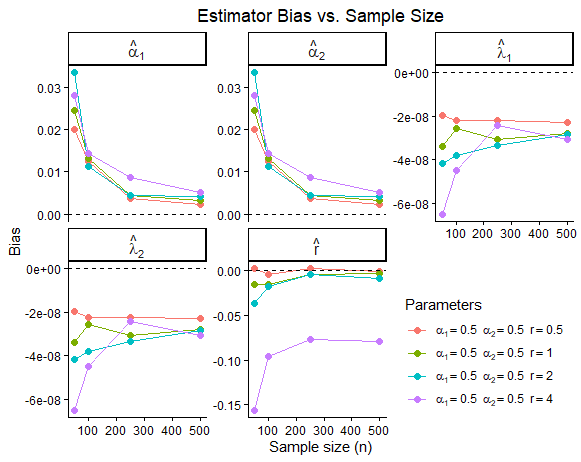}
		\caption{Bias of the estimators}
		\label{fig:W-bias-0505}
	\end{subfigure}
	\hfill
	\begin{subfigure}{0.48\linewidth}
		\centering
		\includegraphics[width=\linewidth]{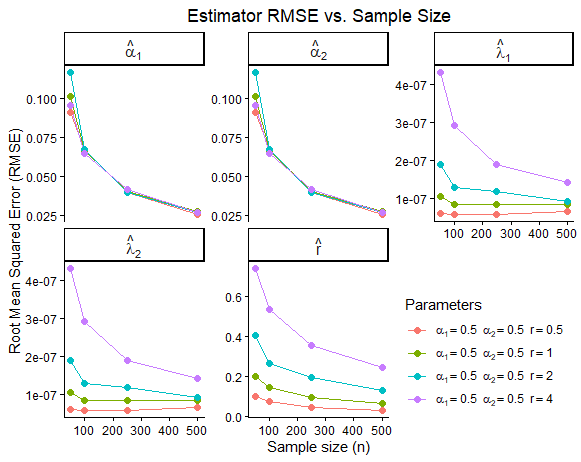}
		\caption{RMSE of the estimators}
		\label{fig:W-rmse-0505}
	\end{subfigure}
	
	\caption{Bias and RMSE of the estimators under model $W$ for $(\alpha_1,\alpha_2)=(0.5,0.5)$.}
	\label{fig:W-bias-rmse-0505}
\end{figure}

\begin{figure}[H]
	\centering
	
	\begin{subfigure}{0.48\linewidth}
		\centering
		\includegraphics[width=\linewidth]{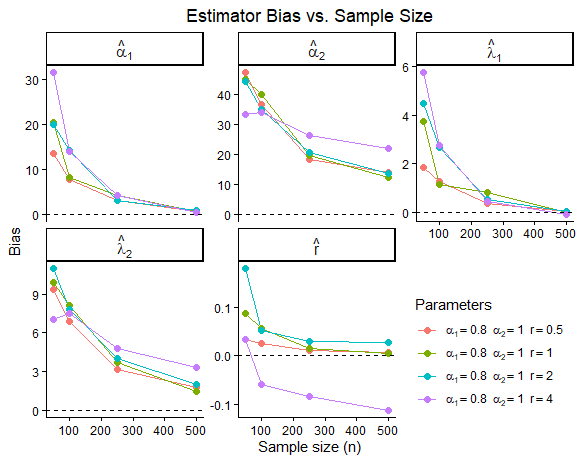}
		\caption{Bias}
		\label{fig:W-rmse-0810}
	\end{subfigure}
	\hfill
	\begin{subfigure}{0.48\linewidth}
		\centering
		\includegraphics[width=\linewidth]{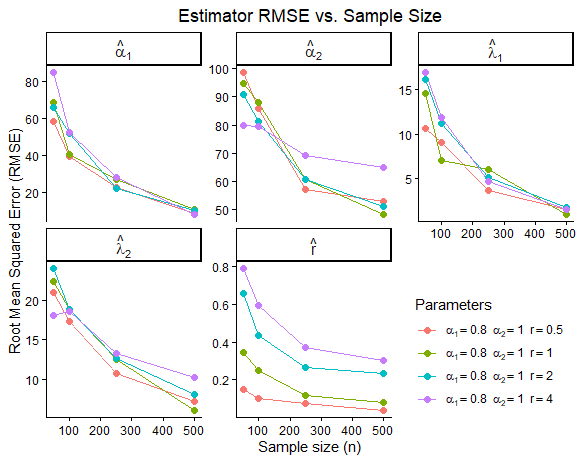}
		\caption{RMSE}
		\label{fig:W-rmse-0810}
	\end{subfigure}
	
	\caption{%
		Bias and RMSE of the estimators under model $W$ for $(\alpha_1,\alpha_2)=(0.8,1.0)$.
	}
	\label{fig:W-bias-rmse-0810}
\end{figure}

\begin{figure}[H]
	\centering
	
	\begin{subfigure}{0.48\linewidth}
		\centering
		\includegraphics[width=\linewidth]{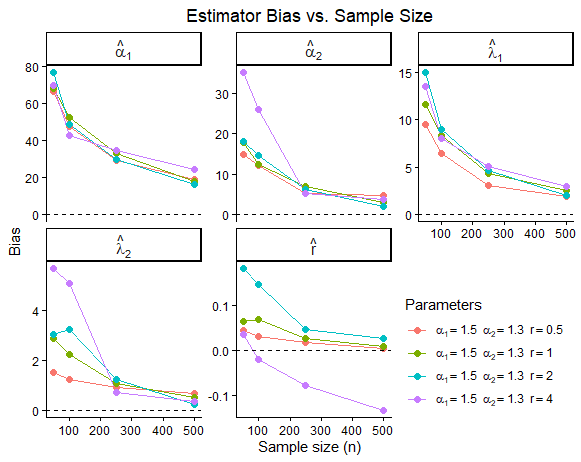}
		\caption{ Bias}
	\end{subfigure}
	\hfill
	\begin{subfigure}{0.48\linewidth}
		\centering
		\includegraphics[width=\linewidth]{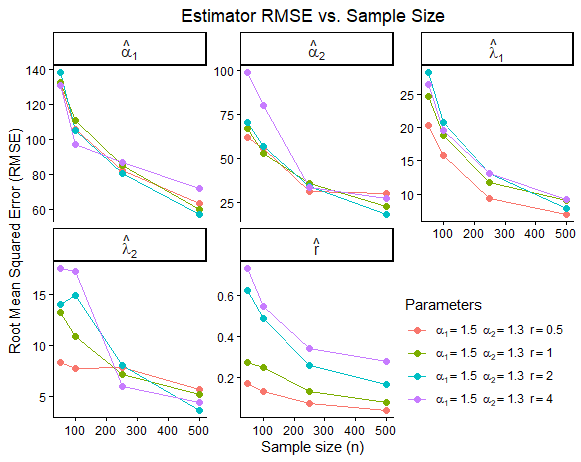}
		\caption{ RMSE}
	\end{subfigure}
	
	\caption{%
		Bias and RMSE of the estimators under model $W$ for $(\alpha_1,\alpha_2)=(1.5,1.3)$.
	}
	\label{fig:W-bias-rmse-1513}
\end{figure}

\begin{figure}[H]
	\centering
	
	\begin{subfigure}{0.48\linewidth}
		\centering
		\includegraphics[width=\linewidth]{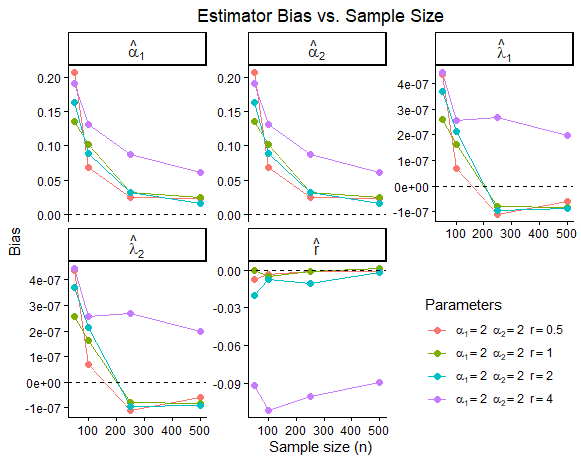}
		\caption{Bias}
	\end{subfigure}
	\hfill
	\begin{subfigure}{0.48\linewidth}
		\centering
		\includegraphics[width=\linewidth]{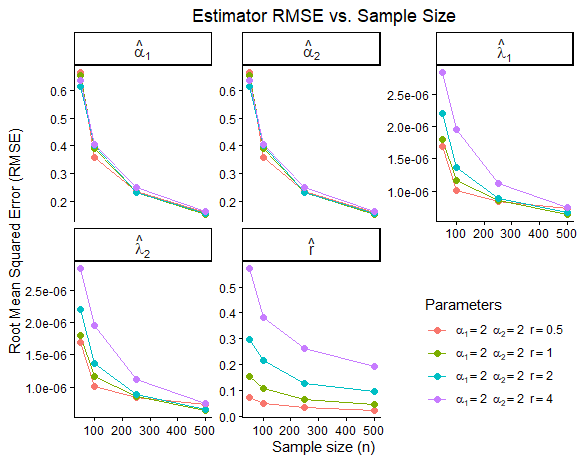}
		\caption{RMSE}
	\end{subfigure}
	
	\caption{%
		Bias and RMSE of the estimators under model $W$ for $(\alpha_1,\alpha_2)=(2.0,2.0)$.
	}
	\label{fig:W-bias-rmse-20-20}
\end{figure}


\subsection{Results for model $Z$}
\label{sec:resultados-Z}

Figures~\ref{fig:Z-bias-rmse-0505}--\ref{fig:Z-bias-rmse-2020} summarize the patterns observed in the four scenarios considered. Model $Z$ proved considerably more challenging than model $W$. The transformation $T_r(x)$ in \eqref{transf-using} and the term
$
\sqrt{1 - (1 - z)^{1/r}}
$
in the density \eqref{PDF-Z-gammas} reduce the concentration of information and increase estimation sensitivity, particularly in asymmetric scenarios or when the shape parameters are below~1. The shape estimators $\widehat{\alpha}_1$ and $\widehat{\alpha}_2$ exhibited large biases and substantial RMSEs across all scenarios. Overestimation of the shape parameters was recurrent both in the low-shape cases $(0.5,0.5)$ and $(0.8,1.0)$ and in the higher-shape scenarios $(1.5,1.3)$ and $(2.0,2.0)$. Although increasing $n$ consistently reduced the errors, RMSEs remained high even at $n=500$, indicating that the structure of model $Z$ imposes limitations on precise identification of the shape parameters.

Estimation of the scale parameters $\lambda_1$ and $\lambda_2$ performed considerably better than estimation of the shape parameters. In the $(0.5,0.5)$ scenario, scale recovery was excellent, with very small RMSEs for all sample sizes. In the remaining scenarios, RMSEs for the scale parameters remained systematically below those for the shape parameters, even though they may increase under unfavorable configurations (for instance, when $r=4$). This behavior suggests that the scale parameters tend to be relatively more stable under model $Z$, albeit only in comparison with the other model components.

Estimation of the transformation parameter $r$ was the most sensitive among all parameters. As $r$ increases, the transformation $T_r(x)$ yields flatter densities, reducing the information available for estimation and markedly increasing RMSE. For moderate values of $r$ ($0.5$ and $1$), performance is superior to that obtained with $r=4$, although precision remains limited. For $r=4$, RMSE remains high even at $n=500$, indicating pronounced instability.

\begin{figure}[H]
	\centering
	\begin{subfigure}{0.48\linewidth}
		\centering
		\includegraphics[width=\linewidth]{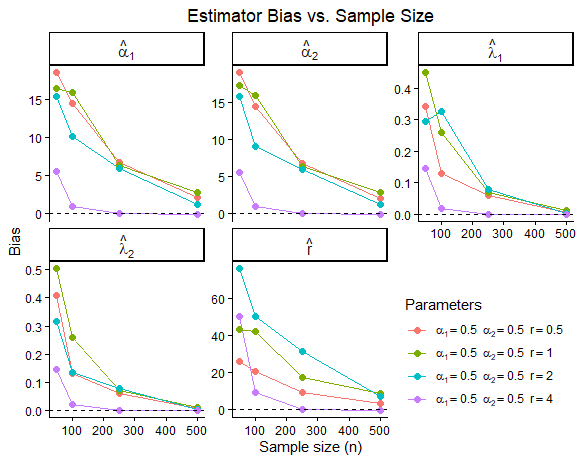}
		\caption{Bias of the estimators}
		\label{fig:Z-bias-0505}
	\end{subfigure}
	\hfill
	\begin{subfigure}{0.48\linewidth}
		\centering
		\includegraphics[width=\linewidth]{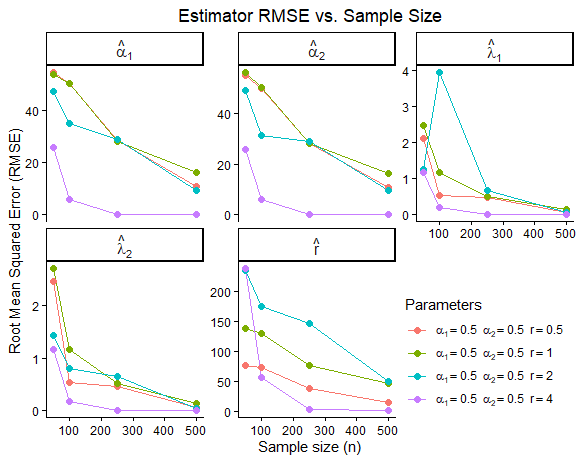}
		\caption{RMSE of the estimators}
		\label{fig:Z-rmse-0505}
	\end{subfigure}
	\caption{Bias and RMSE of the estimators under model $Z$ for $(\alpha_1,\alpha_2)=(0.5,0.5)$, with $n\in\{50,100,250,500\}$ and $r\in\{0.5,1,2,4\}$.}
	\label{fig:Z-bias-rmse-0505}
\end{figure}

\begin{figure}[H]
	\centering
	\begin{subfigure}{0.48\linewidth}
		\centering
		\includegraphics[width=\linewidth]{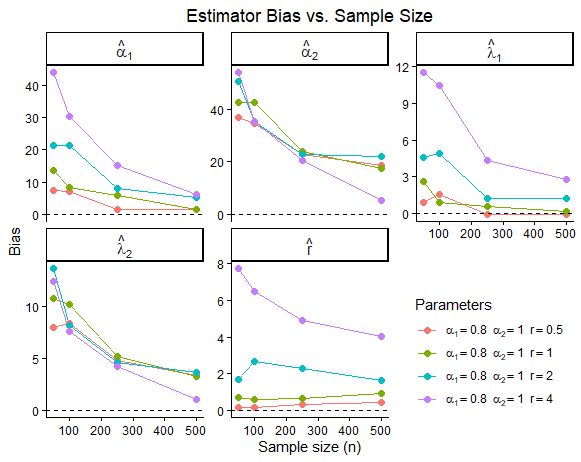}
		\caption{Bias of the estimators}
		\label{fig:Z-bias-0810}
	\end{subfigure}
	\hfill
	\begin{subfigure}{0.48\linewidth}
		\centering
		\includegraphics[width=\linewidth]{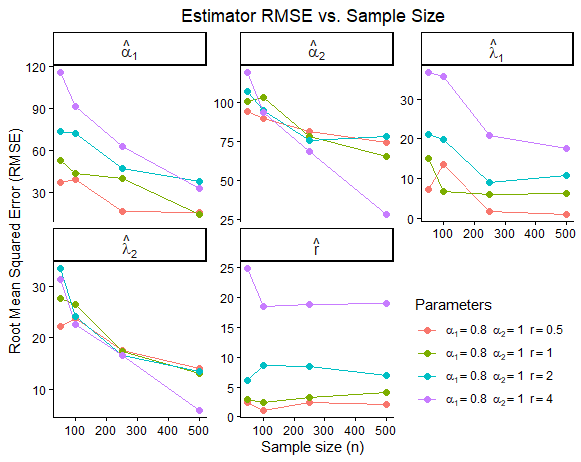}
		\caption{RMSE of the estimators}
		\label{fig:Z-rmse-0810}
	\end{subfigure}
	\caption{Bias and RMSE of the estimators under model $Z$ for $(\alpha_1,\alpha_2)=(0.8,1.0)$, with $n\in\{50,100,250,500\}$ and $r\in\{0.5,1,2,4\}$.}
	\label{fig:Z-bias-rmse-0810}
\end{figure}

\begin{figure}[H]
	\centering
	\begin{subfigure}{0.48\linewidth}
		\centering
		\includegraphics[width=\linewidth]{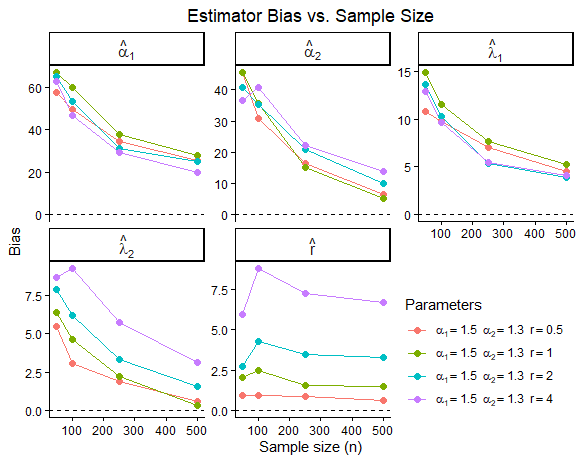}
		\caption{Bias of the estimators}
		\label{fig:Z-bias-1513}
	\end{subfigure}
	\hfill
	\begin{subfigure}{0.48\linewidth}
		\centering
		\includegraphics[width=\linewidth]{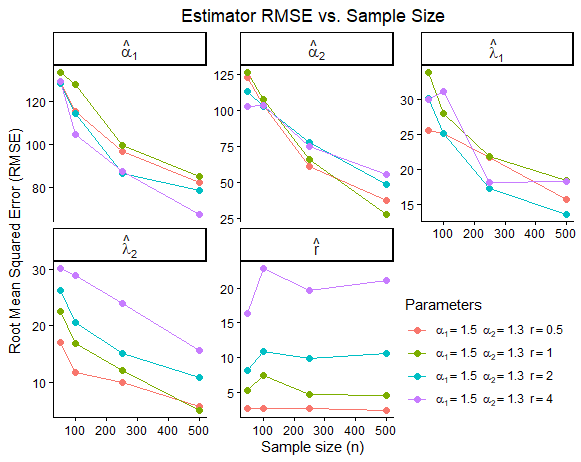}
		\caption{RMSE of the estimators}
		\label{fig:Z-rmse-1513}
	\end{subfigure}
	\caption{Bias and RMSE of the estimators under model $Z$ for $(\alpha_1,\alpha_2)=(1.5,1.3)$, with $n\in\{50,100,250,500\}$ and $r\in\{0.5,1,2,4\}$.}
	\label{fig:Z-bias-rmse-1513}
\end{figure}

\begin{figure}[H]
	\centering
	\begin{subfigure}{0.48\linewidth}
		\centering
		\includegraphics[width=\linewidth]{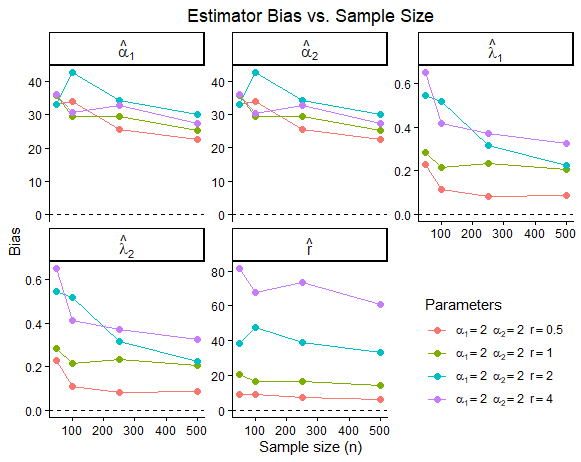}
		\caption{Bias of the estimators}
		\label{fig:Z-bias-2020}
	\end{subfigure}
	\hfill
	\begin{subfigure}{0.48\linewidth}
		\centering
		\includegraphics[width=\linewidth]{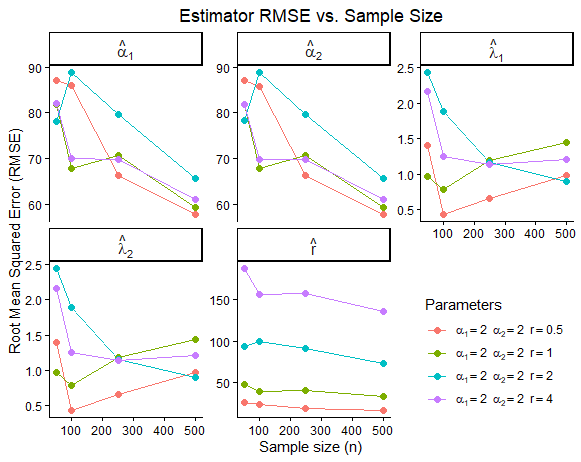}
		\caption{RMSE of the estimators}
		\label{fig:Z-rmse-2020}
	\end{subfigure}
	\caption{Bias and RMSE of the estimators under model $Z$ for $(\alpha_1,\alpha_2)=(2.0,2.0)$, with $n\in\{50,100,250,500\}$ and $r\in\{0.5,1,2,4\}$.}
	\label{fig:Z-bias-rmse-2020}
\end{figure}


\paragraph*{Comparison between the models}
\label{par:comparacao-W-Z}

Models $W$ and $Z$ exhibit distinct behaviors from an estimation standpoint. In general, model $W$ shows greater numerical stability, with both shape and scale estimators displaying smaller biases and RMSEs as the sample size increases. The transformation parameter $r$ is also recovered with good accuracy under model $W$, with substantially smaller errors than those observed under model $Z$, consistently with the structure defined in \eqref{main-rv-1}.

Under model $Z$, sensitivity to asymmetric configurations and very low shape parameters is more pronounced. Recovering $\alpha_1$, $\alpha_2$, and especially $r$ becomes more difficult, indicating limitations in jointly identifying the parameters under low-shape or strongly asymmetric settings. Thus, while model $W$ adapts satisfactorily to a broader range of parametric configurations, model $Z$, defined in \eqref{main-rv-1}, requires larger sample sizes and exhibits more favorable performance in more regular configurations, such as those with symmetry or moderate shape.

\paragraph*{Final remarks on estimation}
\label{par:consideracoes-estimacao}

The results indicate that parameter recovery in the proposed models depends sensitively on the interplay between the generating distribution's shape, the transformation parameter, and the sample size. Under model $W$, the estimators behave robustly and efficiently in most scenarios. For model $Z$, however, maximum likelihood estimation is markedly more sensitive in strongly asymmetric settings, where the precision of the estimates remains limited even for larger samples. These limitations suggest the investigation of alternative estimation methods in future work.


\section{Application to real data} \label{Applications}

This section presents an application of the proposed models using a real data set, with the aim of illustrating the estimation procedure and assessing the goodness of fit of distributions $W$ and $Z$. For this purpose, data on the Gini index for the year 2021 are considered, obtained from the \textit{Our World in Data} repository \citep{owid-economic-inequality}. The Gini index is a continuous variable supported on $(0,1)$, making it well suited for evaluating the performance of the proposed unit-interval models $W$ and $Z$.

After preliminary data processing, a total of 78 countries with available observations for the selected year were identified. Table~\ref{tab:gini_summary_2021} reports the descriptive statistics of the variable under study. The mean ($0.350$) and median ($0.340$) are very close, indicating a fairly symmetric central tendency, although the distribution exhibits moderate right skewness (skewness equal to $0.899$). The coefficient of variation, approximately $18.5\%$, suggests relatively low dispersion, while the positive kurtosis ($0.812$) indicates heavier tails compared to the normal distribution. The observed range, from $0.241$ to $0.551$, confirms that the data lie entirely within the unit interval and display sufficient variability to justify the use of flexible parametric distributions.

\begin{table}[htbp]
	\centering
	\caption{Descriptive statistics for the Gini index data (2021).}
	\label{tab:gini_summary_2021}
	\begin{tabular}{ccccccccc}
		\toprule
		Mean & Median & SD & CV (\%) & Skewness & Kurtosis & Minimum & Maximum & Sample size \\
		\midrule
		0.350 & 0.340 & 0.065 & 18.518\% & 0.899 & 0.812 & 0.241 & 0.551 & 78 \\
		\bottomrule
	\end{tabular}
\end{table}

Figure~\ref{fig:gini-map} displays the thematic map of the spatial distribution of the Gini index for the year 2021. The geographic pattern is striking: countries in Latin America, particularly Brazil, Colombia, and Panama, are concentrated in the upper ranges of the index, reflecting high levels of income inequality. In contrast, most of Western Europe, as well as Canada, South Korea, and Taiwan, exhibit lower Gini values. Across Africa and the Middle East, a wider diversity of inequality levels is observed, with countries spanning several ranges of the indicator. This spatial heterogeneity is consistent with the statistical summary presented in Table~\ref{tab:gini_summary_2021} and further supports the use of models capable of capturing asymmetry and concentration patterns.
\begin{figure}[H]
	\centering
	\includegraphics[width=15.0cm]{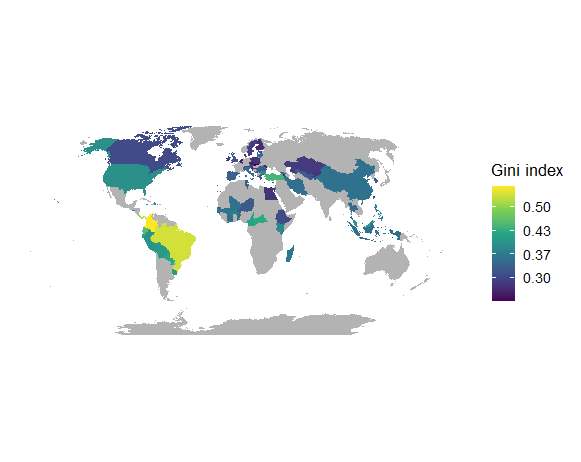}
	\caption{World map of the Gini index for the year 2021.}
	\label{fig:gini-map}
\end{figure}

It is important to emphasize that this application is purely illustrative. The objective is not to provide a substantive analysis of global income inequality patterns, but rather to demonstrate how the proposed models behave when applied to real data supported on the unit interval.

\subsection{Fitting unit-interval models}

In this stage, the models $W$ and $Z$ introduced in Section~\ref{sec:02} were fitted to the 2021 Gini index data, alongside the Beta and Kumaraswamy distributions, which are commonly used for modeling continuous variables on $(0,1)$. In all cases, parameter estimation was carried out via maximum likelihood. For models $W$ and $Z$, a multi-start strategy was employed for the parameters $(\lambda_1, \lambda_2, r)$, while for the Beta and Kumaraswamy distributions, initial values were obtained using sample moments. This approach enhances numerical stability and enables a fair comparison of the goodness of fit across competing models.

Table~\ref{tab:ajuste_gini2021} summarizes the maximum likelihood estimates, the associated standard errors—computed from the observed information matrix whenever it was numerically invertible—and the Akaike (AIC) and Bayesian (BIC) information criteria. In terms of likelihood, model $Z$ achieves the highest log-likelihood value ($\widehat{\ell} \approx 108.96$), followed closely by the Beta ($\widehat{\ell} \approx 104.90$) and $W$ ($\widehat{\ell} \approx 104.27$) models, whereas the Kumaraswamy distribution shows inferior performance ($\widehat{\ell} \approx 96.29$). When penalization for model complexity is taken into account, the AIC favors model $Z$ (AIC $\approx -207.93$), with the Beta distribution ranking second (AIC $\approx -205.80$), while the $W$ and Kumaraswamy models receive less favorable scores. Conversely, the BIC—which imposes a stronger penalty for the number of parameters—favors the Beta distribution (BIC $\approx -201.09$), reflecting its more parsimonious two-parameter structure compared to the five-parameter formulations of models $W$ and $Z$.

From a numerical standpoint, the asymptotic standard errors of the Beta shape parameters are moderate, indicating good parameter identifiability for this model. For the Kumaraswamy distribution, the estimate of parameter $b$ reaches a large value ($\widehat{b} \approx 100$), close to the upper bound imposed during optimization, and is associated with a large standard error, suggesting difficulties in parameter identification. For models $W$ and $Z$, the computation of the observed information matrix reveals some instability: several standard errors associated with the scale parameters are non-numerical (NaN), indicating strong parameter correlations and relatively flat regions of the log-likelihood surface. Although this does not invalidate the use of information criteria for global model comparison, it calls for caution when interpreting individual uncertainty measures for $\widehat{\lambda}_1$ and $\widehat{\lambda}_2$.

\begin{table}[!htpb]
	\centering
	\caption{Maximum likelihood estimates (standard errors in parentheses, when available) and model selection criteria for the Gini index data in 2021.} 
	\label{tab:ajuste_gini2021}
	\begin{tabular}{lcccc}
		\toprule
		& $W$ & $Z$ & Beta & Kumaraswamy \\ 
		\midrule
		$\alpha_1$ & 44.234 
		(n.e.) & 0.659 
		(0.624) & 19.588 
		(3.113) & 4.734 
		(0.338) \\ 
		$\alpha_2$ & 128.700 
		(n.e.) & 47.615 
		(n.e.) & 36.348 
		(5.810) & 100.000 
		(31.302) \\ 
		$\lambda_1$ & 70.330 
		(3.2e-04) & 3.976 
		(n.e.) & — & — \\ 
		$\lambda_2$& 11.879 
		(5.4e-04) & 0.117 
		(n.e.) & — & — \\ 
		$r$ & 4.544 
		(0.001) & 0.059 
		(0.043) & — & — \\ 
		LogLik & 104.27 & 108.96 & 104.90 & 96.29 \\ 
		AIC & -198.55 & -207.93 & -205.80 & -188.58 \\ 
		BIC & -186.76 & -196.15 & -201.09 & -183.87 \\ 
		\bottomrule
	\end{tabular}
\end{table}

Figure~\ref{fig:hist-gini-fits} complements the results in Table~\ref{tab:ajuste_gini2021} by illustrating the fit of the four distributions to the histogram of the observed Gini coefficients in 2021. The Beta and $Z$ models track the empirical shape more closely, providing a balanced representation of the skewness present in the data.

Model $Z$ attains the highest log-likelihood value and the lowest AIC, confirming its strong overall fit. Visually, the fitted density closely follows the modal behavior of the data, although it slightly overestimates density around the mode and decays more rapidly in the right tail. These discrepancies are subtle and do not compromise the model’s overall performance.

The Beta distribution also yields a consistent fit, particularly by adequately capturing both the central behavior and tail structure. Combined with its parsimonious specification, this explains why it is favored by the BIC, which more heavily penalizes model complexity.

Model $W$ exhibits intermediate performance. Its estimated density reproduces the central region of the histogram reasonably well, but does not achieve the same level of adherence as the Beta and $Z$ models, which is reflected in the information criteria. In the case of the Kumaraswamy distribution, difficulties in simultaneously representing concentration and dispersion become apparent both in the shape of the fitted curve and in the numerical instability of parameter $b$, whose large estimated value limits model flexibility.
\begin{figure}[H]
	\centering
	\includegraphics[width=10.8cm]{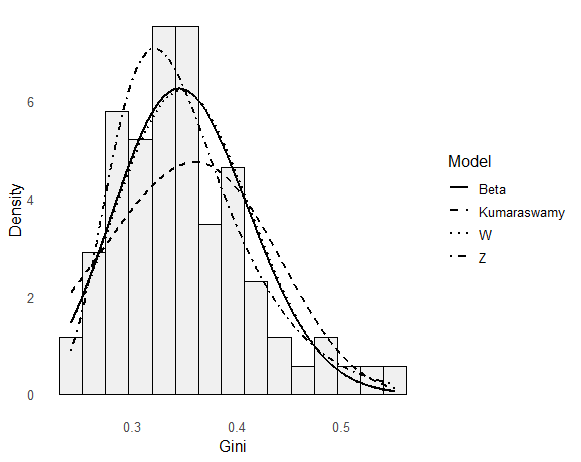}
	\caption{Histogram of Gini coefficients (2021) with fitted density curves for the $W, Z$, Beta, and Kumaraswamy models.}
	\label{fig:hist-gini-fits}
\end{figure}

As a complement to the density-based analysis, Figure~\ref{fig:ecdf-cdf-gini} compares the empirical distribution function of the Gini coefficients with the fitted cumulative distribution functions obtained from each model. The empirical distribution is shown as a step function, while the dashed continuous curves correspond to the fitted CDFs based on maximum likelihood estimates. The bands surrounding the empirical curve represent the expected sampling variability and serve as a visual reference for assessing goodness of fit.

Model $Z$ exhibits a high level of agreement between the empirical and fitted cumulative distributions over almost the entire support. The Beta distribution also shows good overall performance, although with more noticeable discrepancies in intermediate regions. Model $W$ displays more pronounced deviations, particularly in the central part of the distribution, while the Kumaraswamy model shows weaker global adherence, with systematic departures across much of the support. These findings are consistent with the evidence provided by the information criteria and anticipate the patterns observed in the residual diagnostics.
\begin{figure}[H]
	\centering
	\includegraphics[width=15.0cm]{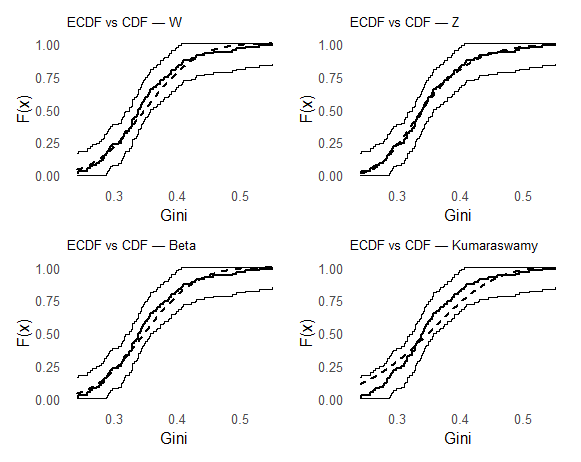}
	\caption{Comparison between the empirical distribution function and the fitted cumulative distribution functions for the $W, Z$, Beta, and Kumaraswamy models applied to the 2021 Gini index data.}
	\label{fig:ecdf-cdf-gini}
\end{figure}

Finally, the analysis of randomized quantile residuals ($R_Q$) and generalized Cox--Snell (GCS) residuals, illustrated in Figures~\ref{fig:qq-rq-gini} and~\ref{fig:qq-gcs-gini}, provides a complementary diagnostic of model adequacy. For the $Z$ and Beta distributions, the residuals are largely aligned with the theoretical reference lines, with more pronounced deviations only in the tails, which is consistent with the presence of countries exhibiting extreme inequality levels. In contrast, the $W$ and Kumaraswamy models show more substantial departures, particularly in the upper quantiles, indicating weaker adherence to the data. These patterns reinforce the conclusions drawn from the information criteria, confirming that models $Z$ and Beta provide superior fits to the 2021 Gini data. Nevertheless, the application demonstrates that models $W$ and $Z$ are competitive with widely used distributions such as the Beta and Kumaraswamy when modeling continuous data on the unit interval.

\begin{figure}[H]
	\centering
	\includegraphics[width=15.0cm]{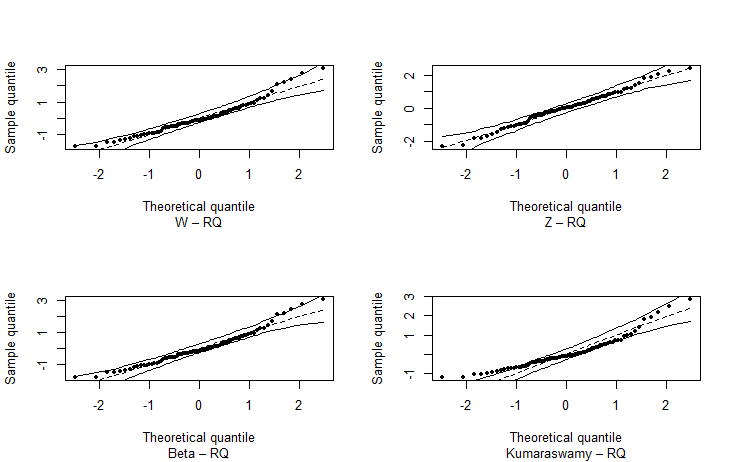}
	\caption{Q--Q plots of randomized quantile residuals ($R_Q$) for the $W, Z$, Beta, and Kumaraswamy models, with simulated envelopes.}
	\label{fig:qq-rq-gini}
\end{figure}

\begin{figure}[H]
	\centering
	\includegraphics[width=15.0cm]{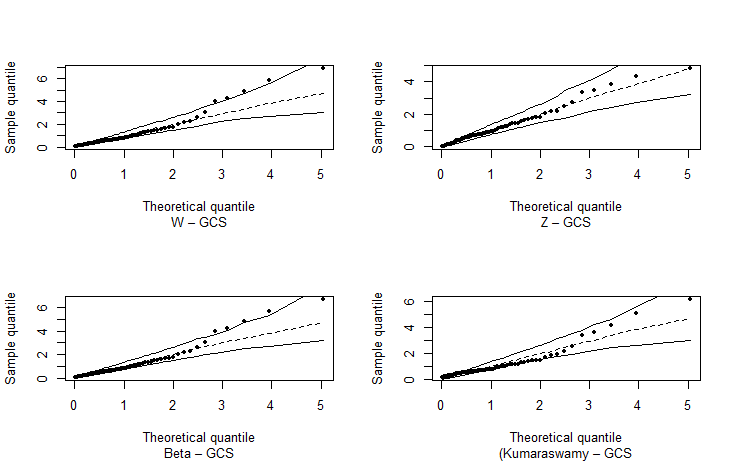}
	\caption{Q--Q plots of generalized Cox--Snell (GCS) residuals for the $W, Z$, Beta, and Kumaraswamy models, with simulated envelopes.}
	\label{fig:qq-gcs-gini}
\end{figure}


\section{Conclusions}\label{sec:08}

This work introduces two novel families of probability distributions on the unit interval, obtained through non-injective transformations of the gamma ratio. The proposed models extend classical unit-interval distributions and establish a direct connection with well-known inequality measures, such as the Gini and Atkinson indices. The derived closed-form expressions for density, cumulative distribution, and raw moments, alongside maximum likelihood estimation procedures, demonstrate both analytical tractability and practical applicability. The flexibility of the models is highlighted through the parameter $r$, which controls the shape of the resulting distributions, allowing the capture of diverse features such as unimodality and skewness. Potential applications span econometrics, reliability analysis, and other fields where modeling proportions or normalized quantities is essential. Future work may explore correlated gamma components, extend the transformations to multivariate settings, and investigate Bayesian estimation procedures for these models.


%
%
	\paragraph*{Acknowledgements}
The research was supported in part by CNPq and CAPES grants from the Brazilian government.
	
	\paragraph*{Disclosure statement}
	There are no conflicts of interest to disclose.



\newpage
\begin{appendices}
	\section{Some technical results}

The proof of the following result is a direct consequence of the gamma distribution definition and is therefore omitted.
\begin{proposition}\label{prop-a-1}
	If $X\sim\text{Gamma}(\alpha_1,\lambda_1)$ and $Y\sim\text{Gamma}(\alpha_2,\lambda_2)$, then, we have:
	\begin{align*}
		\mathbb{E}\left[\exp(-Xx)\right]
		=
		{\lambda_1^{\alpha_1}\over(x+\lambda_1)^{\alpha_1}},
		\quad 
		\mathbb{E}\left[X\exp(-Xx)\right]
		=
		{\alpha_1\lambda_1^{\alpha_1}\over (x+\lambda_1)^{\alpha_1+1}},
	\end{align*}
		and
		\begin{align*}
		\mathbb{E}\left[\exp(-Yx)\right]
		=
		{\lambda_2^{\alpha_2}\over(x+\lambda_2)^{\alpha_2}},
		\quad 
		\mathbb{E}\left[Y\exp(-Yx)\right]
		=
		{\alpha_2\lambda_2^{\alpha_2}\over (x+\lambda_2)^{\alpha_2+1}}.
	\end{align*}
\end{proposition}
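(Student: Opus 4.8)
The plan is to verify each identity by direct integration against the gamma density, reducing everything to the defining integral $\Gamma(z)=\int_0^\infty t^{z-1}\exp(-t)\,{\rm d}t$.

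First I would write, for $X\sim\text{Gamma}(\alpha_1,\lambda_1)$ and any $x\ge 0$,
\[
\mathbb{E}\left[\exp(-Xx)\right]
=
\frac{\lambda_1^{\alpha_1}}{\Gamma(\alpha_1)}\int_0^\infty u^{\alpha_1-1}\exp\{-(x+\lambda_1)u\}\,{\rm d}u,
\]
an integral which converges precisely because $x+\lambda_1>0$. The substitution $t=(x+\lambda_1)u$ turns the right-hand side into $(x+\lambda_1)^{-\alpha_1}\Gamma(\alpha_1)$, which gives the first claimed formula after cancelling $\Gamma(\alpha_1)$. For $\mathbb{E}\left[X\exp(-Xx)\right]$ I would run the same computation with an extra factor $u$ in the integrand, obtaining $\Gamma(\alpha_1+1)(x+\lambda_1)^{-(\alpha_1+1)}\lambda_1^{\alpha_1}/\Gamma(\alpha_1)$, and then invoke $\Gamma(\alpha_1+1)=\alpha_1\Gamma(\alpha_1)$ to reach the stated expression. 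Alternatively, the second identity follows by differentiating the first in $x$ and noting $\mathbb{E}[X\exp(-Xx)]=-\tfrac{\partial}{\partial x}\mathbb{E}[\exp(-Xx)]$, the interchange of $\partial_x$ and the integral being justified by dominated convergence (near any fixed $x_0>0$ the integrand's $x$-derivative is bounded in absolute value by $u^{\alpha_1}\exp\{-(x_0/2+\lambda_1)u\}$, which is integrable). The two formulas for $Y$ are obtained verbatim upon replacing $(\alpha_1,\lambda_1)$ by $(\alpha_2,\lambda_2)$.

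I do not expect any genuine obstacle: the only items deserving a word of care are that the defining integral is finite (ensured by $\lambda_i>0$, and nothing more is needed since $x\ge 0$ in all applications) and, if one takes the differentiation route, the legitimacy of differentiating under the integral sign. This is indeed the sort of statement whose proof is a routine consequence of the gamma density definition and may reasonably be omitted.
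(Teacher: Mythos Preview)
Your proposal is correct and matches the paper's approach: the paper explicitly omits the proof, stating it ``is a direct consequence of the gamma distribution definition,'' which is precisely the direct-integration argument you outline.
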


\begin{proposition}\label{prop-a-2}
	If $X\sim\text{Gamma}(\alpha_1,\lambda_1)$ and $Y\sim\text{Gamma}(\alpha_2,\lambda_2)$, then, we have:
	\begin{align*}
		\int_{0}^{\infty}
		\left[
		\int_{0}^{u}
		\exp(-vx)
		{\rm d}F_Y(v)
		\right]
		u
		\exp(-ux)
		{\rm d}F_X(u)
		=
		{\lambda_1^{\alpha_1}\lambda_2^{\alpha_2}\over \text{B}(\alpha_1,\alpha_2+1)} \,
		{\,_{2}F_{1}\left(\alpha_1+\alpha_2+1,1;\alpha_2+1;{x+\lambda_2\over 2x+\lambda_1+\lambda_2}\right)\over (2x+\lambda_1+\lambda_2)^{\alpha_1+\alpha_2+1}},
\end{align*}
and
\begin{align*}
		\int_{0}^{\infty}
		\left[
		\int_{0}^{u}
		v\exp(-vx)
		{\rm d}F_Y(v)
		\right]
		\exp(-ux)
		{\rm d}F_X(u)
		&=
		{\lambda_1^{\alpha_1}\lambda_2^{\alpha_2}\over \text{B}(\alpha_1,\alpha_2)} \,
		{\,_{2}F_{1}\left(\alpha_1+\alpha_2,1;\alpha_2+1;{x+\lambda_2\over 2x+\lambda_1+\lambda_2}\right)\over(x+\lambda_2)(2x+\lambda_1+\lambda_2)^{\alpha_1+\alpha_2}}
		\\[0,2cm]
		&-
		{\lambda_1^{\alpha_1}\lambda_2^{\alpha_2}\over\text{B}(\alpha_1,\alpha_2)
		} \,
		{1\over (x+\lambda_2)(2x+\lambda_1+\lambda_2)^{\alpha_1+\alpha_2}},
	\end{align*}
	where ${}_{2}F_{1}(a,b;c;z)$ is the hypergeometric function,  defined for $\vert z\vert<1$ by the power series
	\begin{align}\label{hypergeometric function}
		{}_{2}F_{1}(a,b;c;z)=\sum _{n=0}^{\infty }{\frac {(a)_{n}(b)_{n}}{(c)_{n}}}{\frac {z^{n}}{n!}}.
	\end{align}
	Here $(q)_n$ is the (rising) Pochhammer symbol, which is defined by:
	\begin{align*}
		{\displaystyle (q)_{n}={\begin{cases}1&n=0,
					\\
					q(q+1)\cdots (q+n-1)&n>0.
		\end{cases}}}
	\end{align*}
\end{proposition}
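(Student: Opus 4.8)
The statement to prove is Proposition~\ref{prop-a-2}. The plan is to evaluate the two nested integrals
\[
I_1(x)=\int_{0}^{\infty}\!\left[\int_{0}^{u}\exp(-vx)\,{\rm d}F_Y(v)\right] u\exp(-ux)\,{\rm d}F_X(u),
\qquad
I_2(x)=\int_{0}^{\infty}\!\left[\int_{0}^{u} v\exp(-vx)\,{\rm d}F_Y(v)\right]\exp(-ux)\,{\rm d}F_X(u),
\]
where ${\rm d}F_X$ and ${\rm d}F_Y$ are the Gamma$(\alpha_1,\lambda_1)$ and Gamma$(\alpha_2,\lambda_2)$ densities. First I would write out the inner integral over $v$ as a lower incomplete-gamma-type object: for $I_1$, $\int_0^u \exp(-vx)\,\frac{\lambda_2^{\alpha_2}}{\Gamma(\alpha_2)}v^{\alpha_2-1}\exp(-\lambda_2 v)\,{\rm d}v$ collapses to $\frac{\lambda_2^{\alpha_2}}{\Gamma(\alpha_2)(x+\lambda_2)^{\alpha_2}}\gamma(\alpha_2,(x+\lambda_2)u)$ (via the substitution $t=(x+\lambda_2)v$), and similarly for $I_2$ with $\alpha_2\mapsto\alpha_2+1$ up to constants. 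Then the outer integral has the form $\int_0^\infty u^{a-1}\exp(-su)\,\gamma(b,cu)\,{\rm d}u$ with explicit $a,b,s,c$; this is a standard integral whose closed form is $\frac{c^b}{b\,(s+c)^{a+b}}\cdot\frac{\Gamma(a+b)}{\Gamma(a)}\cdot{}_2F_1\!\big(1,a+b;b+1;\frac{c}{s+c}\big)$ (see, e.g., \citet{D'Aurizio2016}), and plugging in the parameters should produce exactly the stated ${}_2F_1$ expressions with argument $\tfrac{x+\lambda_2}{2x+\lambda_1+\lambda_2}$.

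In more detail: for $I_1$ the outer integrand is $\frac{\lambda_1^{\alpha_1}}{\Gamma(\alpha_1)}u^{\alpha_1}\exp(-(x+\lambda_1)u)$ times the incomplete-gamma factor, so $a=\alpha_1+1$, $b=\alpha_2$, $s=x+\lambda_1$, $c=x+\lambda_2$; note $s+c=2x+\lambda_1+\lambda_2$ and $\frac{\Gamma(a+b)}{\Gamma(a)\Gamma(b)}=\frac{1}{{\rm B}(\alpha_1+1,\alpha_2)}$, which after using ${\rm B}(\alpha_1+1,\alpha_2)=\frac{\alpha_1}{\alpha_1+\alpha_2}{\rm B}(\alpha_1,\alpha_2)$ and the reindexing identity for ${}_2F_1$ should match the factor $\frac{\lambda_1^{\alpha_1}\lambda_2^{\alpha_2}}{{\rm B}(\alpha_1,\alpha_2+1)}$ in the claim. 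For $I_2$ the inner integral produces $\gamma(\alpha_2+1,(x+\lambda_2)u)$ up to constants, so $a=\alpha_1$, $b=\alpha_2+1$, and the same master formula applies; the claim's two-term form (a ${}_2F_1$ minus a rational term) should come out either directly or by applying the contiguous relation $\gamma(b+1,z)=b\,\gamma(b,z)-z^b e^{-z}$ to reduce to the $I_1$-type shape plus an elementary integral, whichever is cleaner. I would present whichever reduction keeps the Pochhammer/parameter bookkeeping shortest.

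The main obstacle I anticipate is purely bookkeeping: getting the hypergeometric parameters, the Beta-function normalizations, and the power of $(2x+\lambda_1+\lambda_2)$ to line up exactly with the stated right-hand sides, especially since the ${}_2F_1$ in the claim is written with parameter triple $(\alpha_1+\alpha_2+1,1;\alpha_2+1)$ rather than the $(1,a+b;b+1)$ form that the master integral formula naturally gives — these agree by the symmetry ${}_2F_1(a,b;c;z)={}_2F_1(b,a;c;z)$, but one must be careful that $a+b=\alpha_1+\alpha_2+1$ in the $I_1$ case and $a+b=\alpha_1+\alpha_2+1$ again in the $I_2$ case (with the $+1$ coming from $b=\alpha_2+1$), so the two expressions differ in the rational prefactor and not in the ${}_2F_1$ index, which is consistent with the claim. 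A secondary point to check is convergence of the master integral: it holds for all $x>0$ since $s,c>0$ and $a,b>0$, so no extra hypotheses are needed. Once the parameter matching is verified, substituting $I_1$ and $I_2$ into the displayed formulas gives the two asserted identities and completes the proof. $\blacksquare$
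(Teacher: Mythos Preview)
Your approach is essentially identical to the paper's: reduce the inner integral to a lower incomplete gamma, then apply the D'Aurizio master integral, and for $I_2$ use the recurrence $\gamma(s+1,x)=s\gamma(s,x)-x^s e^{-x}$ to split into a hypergeometric term plus an elementary gamma integral. That is exactly how the paper proceeds, so your plan is sound.

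One concrete bookkeeping correction: the master formula you quote should read
\[
\int_0^\infty u^{a-1}e^{-su}\gamma(b,\theta u)\,{\rm d}u
=\frac{\theta^b\,\Gamma(a+b)}{b\,(s+\theta)^{a+b}}\,{}_2F_1\!\left(a+b,1;b+1;\tfrac{\theta}{s+\theta}\right),
\]
with no $\Gamma(a)$ in the denominator. In $I_1$ the factor $1/b=1/\alpha_2$ combines with the $1/\Gamma(\alpha_2)$ from the $Y$-density to give $1/\Gamma(\alpha_2+1)$, and together with the $1/\Gamma(\alpha_1)$ from the $X$-density and the $\Gamma(\alpha_1+\alpha_2+1)$ from the master formula you obtain $1/\mathrm{B}(\alpha_1,\alpha_2+1)$ \emph{directly}---no ``reindexing identity for ${}_2F_1$'' is needed, and the intermediate $1/\mathrm{B}(\alpha_1+1,\alpha_2)$ you wrote never actually arises. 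With that fix the parameter matching is immediate for both $I_1$ and $I_2$.
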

\begin{proof}
	It is simple to observe that
	\begin{align*}
		\int_{0}^{u}
		\exp(-vx)
		{\rm d}F_Y(v)
		=
		{\lambda_2^{\alpha_2}\over \Gamma(\alpha_2) (x+\lambda_2)^{\alpha_2}} \,
		\gamma(\alpha_2,(x+\lambda_2)u)
	\end{align*}
	with $\gamma(s,x)=\int_{0}^{x}t^{s-1}\exp(-t) {\rm d}t$ being the lower incomplete gamma function,
	and
	\begin{align*}
		\int_{0}^{u}
		v\exp(-vx)
		{\rm d}F_Y(v)
		&=
		{\lambda_2^{\alpha_2}\over \Gamma(\alpha_2) (x+\lambda_2)^{\alpha_2+1}} \,
		\,
		\gamma(\alpha_2+1,(x+\lambda_2)u)
		\\[0,2cm]
		&=
		{\lambda_2^{\alpha_2}\over \Gamma(\alpha_2) (x+\lambda_2)^{\alpha_2+1}} \,
		\left[
		\alpha_2
		\gamma(\alpha_2,(x+\lambda_2)u)
		-
		(x+\lambda_2)^{\alpha_2}u^{\alpha_2}
		\exp\{-(x+\lambda_2)u\}
		\right],
	\end{align*}
	where in the last line the well-known recurrence relation: $\gamma(s+1,x)=s\gamma(s,x)-x^{s}\exp(-x)$, have been used.
	Consequently, 
	{
		\begin{multline*}
			\int_{0}^{\infty}
			\left[
			\int_{0}^{u}
			\exp(-vx)
			{\rm d}F_Y(v)
			\right]
			u
			\exp(-ux)
			{\rm d}F_X(u)
			\\[0,2cm]
			=
			{\lambda_1^{\alpha_1}\lambda_2^{\alpha_2}\over \Gamma(\alpha_1)\Gamma(\alpha_2) (x+\lambda_2)^{\alpha_2}} \int_0^\infty
			u^{\alpha_1}
			\exp\{-(x+\lambda_1)u\}
			\gamma(\alpha_2,(x+\lambda_2)u)
			{\rm d}u
		\end{multline*}
	} 
	and
	{
		\begin{align*}
			\int_{0}^{\infty}
			\Bigg[
			&\int_{0}^{u}
			v\exp(-vx)
			{\rm d}F_Y(v)
			\Bigg]
			\exp(-ux)
			{\rm d}F_X(u)
			\\[0,2cm]
			&
			=
			{\alpha_2\lambda_1^{\alpha_1}\lambda_2^{\alpha_2}\over \Gamma(\alpha_1)\Gamma(\alpha_2) (x+\lambda_2)^{\alpha_2+1}} 
			\int_0^\infty
			u^{\alpha_1-1}
			\exp\{-(x+\lambda_1)u\}
			\gamma(\alpha_2,(x+\lambda_2)u)
			{\rm d}u
			\\[0,2cm]
			&
			-
			{\lambda_1^{\alpha_1}\lambda_2^{\alpha_2}\Gamma(\alpha_1+\alpha_2)\over \Gamma(\alpha_1)\Gamma(\alpha_2)
			} \,
			{1\over (x+\lambda_2)(2x+\lambda_1+\lambda_2)^{\alpha_1+\alpha_2}}.
		\end{align*}
	}\noindent
	Finally, by using the identity $\text{B}(a,b)=\Gamma(a)\Gamma(b)/\Gamma(a+b)$ and  the relation \citep{D'Aurizio2016}:
	\begin{align*}
		\int_0^\infty 
		x^{a-1}
		\exp(-sx)\gamma(b,\theta x){\rm d}x
		=
		{\theta^b\Gamma(a+b)\over b(s+\theta)^{a+b}}
		\,_{2}F_{1}\left(a+b,1;b+1;{\theta\over s+\theta}\right),
	\end{align*}
	in the last two identities, the proof follows.
\end{proof}

\begin{proposition}[Technical result]\label{main-prop}
	For $a,b,c,e>0$ such that $ae<2bc$, and for $p+q>1$, we have
	\begin{align*}
		\int_0^\infty {1\over(ax+b)^p(cx+e)^q}{\rm d}x
		=
		{a^{q-1}\over b^{p+q-1}c^q(p+q-1)}
		\, _2F_1\left(q,p+q-1;p+q;1-{ae\over bc}\right),
	\end{align*}
	where $\, _2F_1(a,b;c;z)$ is the hypergeometric function defined in \eqref{hypergeometric function}.
\end{proposition}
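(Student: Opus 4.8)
The plan is to collapse the integral onto $[0,1]$ by a single rational substitution, and then to recognize the result as a (degenerate) case of Euler's integral representation of ${}_2F_1$, which I will establish directly by a binomial expansion rather than citing it.

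\emph{Step 1 (substitution).} I would set $t=b/(ax+b)$, so that $ax+b=b/t$, $\ x=b(1-t)/(at)$, $\ {\rm d}x=-\,b/(at^2)\,{\rm d}t$, and $x\in(0,\infty)$ corresponds to $t\in(0,1)$. Then $cx+e=[\,bc-(bc-ae)t\,]/(at)$, and after simplifying the product $(ax+b)^{-p}(cx+e)^{-q}\,{\rm d}x$ and reversing the limits of integration to absorb the sign, one is left with
\[
\int_0^\infty \frac{{\rm d}x}{(ax+b)^p(cx+e)^q}
=\frac{a^{q-1}}{b^{p-1}}\int_0^1 \frac{t^{p+q-2}}{\bigl[\,bc-(bc-ae)t\,\bigr]^{q}}\,{\rm d}t
=\frac{a^{q-1}}{b^{p+q-1}c^{q}}\int_0^1 t^{p+q-2}(1-zt)^{-q}\,{\rm d}t,
\]
where $z=1-ae/(bc)$, obtained by pulling the factor $bc$ out of the bracket. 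Note that the left integral converges at $\infty$ and the right one at $0$ precisely when $p+q>1$.

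\emph{Step 2 (binomial expansion).} Since $a,b,c,e>0$ we have $z<1$, and the hypothesis $ae<2bc$ is exactly $z>-1$; hence $|z|<1$. I would expand $(1-zt)^{-q}=\sum_{n\ge0}\frac{(q)_n}{n!}(zt)^n$ and integrate term by term, the interchange being legitimate because $\sum_{n\ge0}\frac{(q)_n|z|^n}{n!}\int_0^1 t^{n+p+q-2}\,{\rm d}t\le\frac{(1-|z|)^{-q}}{p+q-1}<\infty$. Using $\int_0^1 t^{n+p+q-2}\,{\rm d}t=1/(n+p+q-1)$ together with the elementary identity $1/(n+p+q-1)=(p+q-1)_n/\bigl[(p+q-1)(p+q)_n\bigr]$, the sum becomes
\[
\int_0^1 t^{p+q-2}(1-zt)^{-q}\,{\rm d}t
=\frac{1}{p+q-1}\sum_{n\ge0}\frac{(q)_n\,(p+q-1)_n}{(p+q)_n}\,\frac{z^n}{n!}
=\frac{1}{p+q-1}\,{}_2F_1\!\bigl(q,p+q-1;p+q;z\bigr),
\]
with ${}_2F_1$ as in \eqref{hypergeometric function}. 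Substituting $z=1-ae/(bc)$ and combining with Step 1 gives the claim.

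\emph{Main obstacle.} There is no conceptual difficulty here; the only thing to watch is the algebraic bookkeeping in Step 1 — tracking the exponents of $a$, $b$, $c$ and verifying that the bracket simplifies to $bc(1-zt)$ with the correct $z$ — and, in Step 2, checking that the term-by-term integration and the resummation are valid exactly under $p+q>1$ and $|z|<1$, i.e.\ under the stated hypotheses $p+q>1$ and $ae<2bc$.
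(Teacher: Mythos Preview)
Your argument is correct. The only cosmetic wrinkle is that the bound $\sum_{n\ge0}\frac{(q)_n|z|^n}{n!}\le(1-|z|)^{-q}$ in Step~2 tacitly assumes $q>0$ (so that all $(q)_n\ge0$); for general $q$ with $p+q>1$ one would instead appeal to the ratio test to get absolute convergence, but this is harmless and the paper only ever applies the proposition with positive exponents anyway.

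Your route differs from the paper's. The paper makes the \emph{linear} substitution $y=ax/b$, which leaves the integral on $(0,\infty)$ in the form $\int_0^\infty(y+1)^{-p}(y+ae/bc)^{-q}\,{\rm d}y$, and then simply invokes the Euler-type integral representation ${}_2F_1(a,b;c;z)=\frac{\Gamma(c)}{\Gamma(b)\Gamma(c-b)}\int_0^\infty t^{c-b-1}(t+1)^{a-c}(t+1-z)^{-a}\,{\rm d}t$ with the specific parameters $a=q$, $b=p+q-1$, $c=p+q$ (so $c-b=1$). You instead use the \emph{rational} substitution $t=b/(ax+b)$ to land directly on the $[0,1]$ Euler integral $\int_0^1 t^{p+q-2}(1-zt)^{-q}\,{\rm d}t$, and then derive the hypergeometric series by hand via the binomial expansion. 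Your approach is more self-contained --- it proves rather than cites the relevant special-function identity --- while the paper's is shorter because it outsources that step.
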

\begin{proof}
	Making the change of variable $y=ax/b$, we have
	\begin{align}\label{id-I}
		\int_0^\infty {1\over(ax+b)^p(cx+e)^q}{\rm d}x
		&=
		{a^q\over ab^{p-1} c^q b^q}
		\int_0^\infty {1\over(y+1)^p({bc\over a} y+e)^q}{\rm d}y
		{a^{q-1}\over b^{p+q-1}c^q}
		\int_0^\infty {1\over(y+1)^p(y+{ae\over bc})^q}{\rm d}y.
	\end{align}
	Using the integral representation of hypergeometric functions:
	\begin{align*}
		\, _2F_1(a,b;c;z)
		=
		{\Gamma(c)\over \Gamma(b)\Gamma(c-b)}
		\int_{0}^{\infty}
		t^{-b+c-1}
		(t+1)^{a-c}
		(t-z+1)^{-a}
		{\rm d}t,
		\quad c>b>0, \ -1<z<1,
	\end{align*}
	with $a=q$, $b=p+q-1$, $c=p+q$ and $z=1-ae/cb$,
	expression in \eqref{id-I} becomes
	\begin{align*}
		={a^{q-1}\over b^{p+q-1}c^q}
		{\Gamma(p+q-1)\over \Gamma(p+q)} 
		\, _2F_1\left(q,p+q-1;p+q;1-{ae\over bc}\right),
	\end{align*}
	provided $ae<2bc$. Finally, invoking the well-known identity: $\Gamma(z+1)=z\Gamma(z)$, the proof follows.
\end{proof}

\begin{proposition}[Proposition 3.1 of \cite{Dutta2025}]\label{Moments-truncated}
	Let $X$ be a real-valued random variable 
	and $p>0$ be a real number. For all $\varepsilon\geqslant 0$ and $\delta>0$, the following identity is satisfied:
	\begin{align*}
		\mathbb{E}(X^p\mathds{1}_{\{\varepsilon<X<\delta\}})
		=
		\varepsilon^{p}
		\mathbb{P}(X>\varepsilon)
		-
		\delta^{p}
		\mathbb{P}(X>\delta)
		+
		p
		\int_{\varepsilon}^{\delta}
		u^{p-1}
		\mathbb{P}(X>u)
		{\rm d}u.
	\end{align*}
\end{proposition}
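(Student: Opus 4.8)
The plan is to derive the identity from the elementary ``layer-cake'' representation of the power function together with Tonelli's theorem. First I would record the pointwise fact that, for every $x\in(\varepsilon,\delta)$,
\[
x^{p}=\varepsilon^{p}+\int_{\varepsilon}^{x}p\,u^{p-1}\,{\rm d}u=\varepsilon^{p}+p\int_{\varepsilon}^{\delta}u^{p-1}\,\mathds{1}_{\{u<x\}}\,{\rm d}u,
\]
which is just the fundamental theorem of calculus applied to $u\mapsto u^{p}$ on $[\varepsilon,x]$, rewritten so that the range of integration no longer depends on $x$. Multiplying through by $\mathds{1}_{\{\varepsilon<x<\delta\}}$ and replacing $x$ by $X(\omega)$ gives an identity that holds for every $\omega\in\Omega$.

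Next I would take expectations of both sides. The second term becomes $p\,\mathbb{E}\big[\int_{\varepsilon}^{\delta}u^{p-1}\,\mathds{1}_{\{u<X\}}\,\mathds{1}_{\{\varepsilon<X<\delta\}}\,{\rm d}u\big]$; since the integrand is nonnegative and jointly measurable in $(u,\omega)$, Tonelli's theorem permits exchanging $\mathbb{E}$ with $\int_{\varepsilon}^{\delta}{\rm d}u$, which turns it into $p\int_{\varepsilon}^{\delta}u^{p-1}\,\mathbb{P}(u<X<\delta)\,{\rm d}u$. Combined with the first term this yields
\[
\mathbb{E}\big(X^{p}\mathds{1}_{\{\varepsilon<X<\delta\}}\big)=\varepsilon^{p}\,\mathbb{P}(\varepsilon<X<\delta)+p\int_{\varepsilon}^{\delta}u^{p-1}\,\mathbb{P}(u<X<\delta)\,{\rm d}u.
\]

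Finally I would pass from the two-sided probabilities to tail probabilities via $\mathbb{P}(t<X<\delta)=\mathbb{P}(X>t)-\mathbb{P}(X>\delta)$ for $t\in\{\varepsilon\}\cup(\varepsilon,\delta)$ (valid when $X$ has no atom at $\delta$, which is the situation in all applications here; in general $\mathbb{P}(X>\delta)$ is to be read as $\mathbb{P}(X\geqslant\delta)$), split the right-hand side into the part involving $\mathbb{P}(X>\cdot)$ and the part proportional to $\mathbb{P}(X>\delta)$, and note that the coefficient of the latter is exactly $-\big(\varepsilon^{p}+p\int_{\varepsilon}^{\delta}u^{p-1}\,{\rm d}u\big)=-\delta^{p}$ by the very representation used in the first step. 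This collapses to precisely the asserted formula. There is no genuine obstacle; the only steps that deserve an explicit line are the measurability/nonnegativity verification licensing Tonelli and the bookkeeping at the upper endpoint $\delta$.
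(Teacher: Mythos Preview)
Your argument is correct. The paper itself does not supply a proof of this proposition; it merely quotes the statement and attributes it to Proposition~3.1 of Dutta et~al.\ (2025). The layer-cake/Tonelli approach you outline is the standard derivation of such tail-integration formulas, and your bookkeeping is sound---including the observation that the stated identity strictly requires $\mathbb{P}(X\geqslant\delta)$ rather than $\mathbb{P}(X>\delta)$ when $X$ may have an atom at $\delta$, a distinction that is immaterial in the paper's applications since all the relevant distributions are absolutely continuous.
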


\bigskip 
For simplicity, in the next proposition we adopt the following notation:
\begin{align}\label{def-xi}
	\xi(p)
\equiv 	
\int_{1/2}^1
y^p
\left[1-\int_{1-y}^{y} f_{X\over X+Y}(u){\rm d}u\right]{\rm d}y.
\end{align}
\begin{proposition}\label{prop-app-main}
	If $X$ and $Y$ are independent gamma-distributed with parameter vectors $(\alpha_1,\lambda_1)^\top$ and $(\alpha_2,\lambda_2)^\top$, respectively, such that $\lambda_1=\lambda_2$, then
	\begin{multline*}
	\xi(p)
	=
	{1\over p+1}
	\Bigg\{
		{\mathrm {B}(\alpha_1+p+1,\alpha_2)\over \mathrm {B}(\alpha_1,\alpha_2)}
	\left[
	1-I_{1/2}(\alpha_1+p+1,\alpha_2)
	\right]
	+
		{\mathrm {B}(\alpha_1,\alpha_2+p+1)\over \mathrm {B}(\alpha_1,\alpha_2)} \,
	I_{1/2}(\alpha_1,\alpha_2+p+1)
	-
		\left({1\over 2}\right)^{p+1}
	\Bigg\},
	\end{multline*}
	where $p>\max\{-1,-1-\alpha_1,-1-\alpha_2\}$. In the above, $I_x(a,b)={\rm B}(x;a,b)/{\rm B}(a,b)$ denotes the regularized beta function.
\end{proposition}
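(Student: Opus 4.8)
The plan is to exploit the fact that when $\lambda_1=\lambda_2$ the ratio $X/(X+Y)$ is exactly $\mathrm{Beta}(\alpha_1,\alpha_2)$, so that $f_{X/(X+Y)}(u)=u^{\alpha_1-1}(1-u)^{\alpha_2-1}/\mathrm{B}(\alpha_1,\alpha_2)$ and the inner integral in \eqref{def-xi} is a difference of regularized incomplete beta functions:
\[
\int_{1-y}^{y} f_{X\over X+Y}(u)\,{\rm d}u
=
I_y(\alpha_1,\alpha_2)-I_{1-y}(\alpha_1,\alpha_2).
\]
First I would apply the standard symmetry $I_{1-y}(\alpha_1,\alpha_2)=1-I_y(\alpha_2,\alpha_1)$ (which follows from the substitution $s=1-t$ in the incomplete beta integral) to rewrite the bracket in \eqref{def-xi} as a sum of two complementary tails,
\[
1-\int_{1-y}^{y} f_{X\over X+Y}(u)\,{\rm d}u
=
\bigl[1-I_y(\alpha_1,\alpha_2)\bigr]+\bigl[1-I_y(\alpha_2,\alpha_1)\bigr],
\]
so that $\xi(p)$ splits into two integrals of the same shape with the roles of $\alpha_1$ and $\alpha_2$ interchanged.

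Next I would treat a generic term $\int_{1/2}^{1} y^p\,[1-I_y(a,b)]\,{\rm d}y$. Writing $1-I_y(a,b)=\mathrm{B}(a,b)^{-1}\int_y^1 t^{a-1}(1-t)^{b-1}\,{\rm d}t$ and swapping the order of integration by Tonelli's theorem (the region is $\{1/2\le y\le t\le 1\}$), the inner integral becomes $\int_{1/2}^{t} y^p\,{\rm d}y=\bigl(t^{p+1}-(1/2)^{p+1}\bigr)/(p+1)$. This leaves
\[
\int_{1/2}^{1} y^p\,[1-I_y(a,b)]\,{\rm d}y
=
{1\over (p+1)\,\mathrm{B}(a,b)}
\left[\int_{1/2}^{1} t^{a+p}(1-t)^{b-1}\,{\rm d}t-\left({1\over 2}\right)^{p+1}\!\!\int_{1/2}^{1} t^{a-1}(1-t)^{b-1}\,{\rm d}t\right],
\]
and both remaining integrals are incomplete beta tails: the first equals $\mathrm{B}(a+p+1,b)\,[1-I_{1/2}(a+p+1,b)]$ and the second equals $\mathrm{B}(a,b)\,[1-I_{1/2}(a,b)]$. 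The hypotheses $p>-1$ and $p>-1-a$ (hence $p>-1-\alpha_1$ and $p>-1-\alpha_2$ once we sum the two pieces) are exactly what make $p+1\ne0$ and the Beta factor $\mathrm{B}(a+p+1,b)$ finite; $a,b>0$ gives integrability near $t=1$.

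Finally I would add the two pieces (with $(a,b)=(\alpha_1,\alpha_2)$ and $(a,b)=(\alpha_2,\alpha_1)$), using $\mathrm{B}(\alpha_2,\alpha_1)=\mathrm{B}(\alpha_1,\alpha_2)$ and $\mathrm{B}(\alpha_2+p+1,\alpha_1)=\mathrm{B}(\alpha_1,\alpha_2+p+1)$, together with the symmetry identities $I_{1/2}(\alpha_2,\alpha_1)=1-I_{1/2}(\alpha_1,\alpha_2)$ and $I_{1/2}(\alpha_2+p+1,\alpha_1)=1-I_{1/2}(\alpha_1,\alpha_2+p+1)$. The two $(1/2)^{p+1}$ cross-terms carry the factor $[1-I_{1/2}(\alpha_1,\alpha_2)]+[1-I_{1/2}(\alpha_2,\alpha_1)]=1$, so they collapse to a single $-(1/2)^{p+1}$, and the second main term turns into $\mathrm{B}(\alpha_1,\alpha_2+p+1)\mathrm{B}(\alpha_1,\alpha_2)^{-1}\,I_{1/2}(\alpha_1,\alpha_2+p+1)$, yielding precisely the stated formula for $\xi(p)$. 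The only genuinely delicate point is the bookkeeping with the regularized-beta symmetry relations in the last step; the Tonelli interchange and the identification of the truncated integrals with incomplete beta functions are routine.
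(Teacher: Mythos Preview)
Your proposal is correct and follows essentially the same route as the paper: the same initial decomposition $1-\int_{1-y}^{y}f_{X/(X+Y)}=[1-I_y(\alpha_1,\alpha_2)]+[1-I_y(\alpha_2,\alpha_1)]$ (the paper phrases this as $[1-F_{X/(X+Y)}(y)]+[1-F_{Y/(X+Y)}(y)]$), followed by the same identification of the resulting tail integrals with truncated Beta moments and the same use of $I_{1/2}(b,a)=1-I_{1/2}(a,b)$ to collapse the $(1/2)^{p+1}$ terms. The only cosmetic difference is that the paper converts $\int_{1/2}^{1}y^p[1-F(y)]\,{\rm d}y$ into a truncated moment via Proposition~\ref{Moments-truncated}, whereas you carry out the equivalent Tonelli swap by hand; these are the same computation.
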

\begin{proof}
	Note that $\xi(p)$ in \eqref{def-xi} can be written as
\begin{align*}
	\xi(p)
	=
	\int_{1/2}^1
	y^p
	\left[1-F_{X\over X+Y}(y)+F_{X\over X+Y}(1-y)\right]{\rm d}y.
\end{align*}
As $F_{X\over X+Y}(1-y)=1-F_{Y\over X+Y}(y)$, the above term simplifies to
\begin{align*}
	&=
	\int_{1/2}^1
	y^p
	[1-F_{X\over X+Y}(y)]{\rm d}y
	+
	\int_{1/2}^1
	y^p
	[1-F_{Y\over X+Y}(y)]
	{\rm d}y.
\end{align*}
Applying Proposition \ref{Moments-truncated}, the above expression becomes
\begin{align}
	&
	{1\over p+1}
	\left\{
	\mathbb{E}\left[\left({X\over X+Y}\right)^{p+1}\mathds{1}_{\{{1\over 2}<{X\over X+Y}<1\}}\right]
	-
	\left({1\over 2}\right)^{p+1}
	\mathbb{P}\left({X\over X+Y}>{1\over 2}\right)
	\right.
	\nonumber
	\\[0,2cm]
	&\hspace*{6.cm}
	\left.
	+
	\mathbb{E}\left[\left({Y\over X+Y}\right)^{p+1}\mathds{1}_{\{{1\over 2}<{Y\over X+Y}<1\}}\right]
	-
	\left({1\over 2}\right)^{p+1}
	\mathbb{P}\left({Y\over X+Y}>{1\over 2}\right)
	\right\}.
	\label{id-init}
\end{align}

Since $\lambda_1=\lambda_2$, then, it is clear that
\begin{align}\label{id-init-1}
	\mathbb{E}\left[\left({X\over X+Y}\right)^{p+1}\mathds{1}_{\{{1\over 2}<{X\over X+Y}<1\}}\right]
	=
	\int_{1\over 2}^{1}
	\frac{z^{(\alpha_1+p+1)-1}(1-z)^{\alpha_2-1}}{\mathrm {B}(\alpha_1,\alpha_2)}
	{\rm d}z
	=
	{\mathrm {B}(\alpha_1+p+1,\alpha_2)\over \mathrm {B}(\alpha_1,\alpha_2)}
	\left[
	1-I_{1/2}(\alpha_1+p+1,\alpha_2)
	\right]
\end{align}
and, analogously,
\begin{align}\label{id-init-2}
	\mathbb{E}\left[\left({Y\over X+Y}\right)^{p+1}\mathds{1}_{\{{1\over 2}<{Y\over X+Y}<1\}}\right]
	=
	{\mathrm {B}(\alpha_1,\alpha_2+p+1)\over \mathrm {B}(\alpha_1,\alpha_2)} \,
	I_{1/2}(\alpha_1,\alpha_2+p+1),
\end{align}
where in the last step the well-known identity $I_x(a,b)=1-I_{1-x}(b,a)$ was used.

Furthermore,
\begin{align}\label{id-init-3}
	\mathbb{P}\left({X\over X+Y}>{1\over 2}\right)
	=
	1-I_{1/2}(\alpha_1,\alpha_2),
	\quad
	\mathbb{P}\left({Y\over X+Y}>{1\over 2}\right)
	=
	1-I_{1/2}(\alpha_2,\alpha_1)
	=
	I_{1/2}(\alpha_1,\alpha_2).
\end{align}
By plugging \eqref{id-init-1}, \eqref{id-init-2} and \eqref{id-init-3} in \eqref{id-init}, the proof follows.
\end{proof}

\section{Complete results of the simulation study}
\label{ap:simulacao}

This appendix gathers additional results from the simulation study, with particular
emphasis on the scenarios in which the scale parameters are kept fixed
($\lambda_1=\lambda_2=1$). This configuration makes it possible to assess separately
the impact of the shape parameters $(\alpha_1,\alpha_2)$ and of the transformation
parameter $r$ on estimator performance, isolating the effect of scale. For these
models, the bias and RMSE panels obtained from 500 Monte Carlo replications are
reported for each shape scenario, considering $n\in\{50,100,250,500\}$ and
$r\in\{0{.}5,1,2,4\}$. The results corresponding to the full estimation of the five
parameters $(\alpha_1,\alpha_2,\lambda_1,\lambda_2,r)$ are not reproduced here, since
the main interpretation---based exclusively on the bias and RMSE plots---is fully
presented in the main text.

\subsection{Model \texorpdfstring{$W$}{W} — Estimation with \texorpdfstring{$\lambda_1=\lambda_2=1$}{fixed lambdas}}

This section presents the detailed simulation results for model $W$ under
$\lambda_1=\lambda_2=1$. In this setting, only the shape parameters
$(\alpha_1,\alpha_2)$ and the transformation parameter $r$ are estimated by maximum
likelihood. The analysis focuses on the behavior of the estimators across different
sample sizes and values of $r$, highlighting patterns of bias and variability.

The scenarios considered are
$(\alpha_1,\alpha_2)\in\{(0{.}5,0{.}5),\ (0{.}8,1{.}0),\ (1{.}5,1{.}3),\ (2{.}0,2{.}0)\}.$

\subsubsection{Scenario $(\alpha_1,\alpha_2)=(0{.}5,0{.}5)$}

Figure~\ref{fig:Wfixo-bias-rmse-0505} presents the bias and RMSE plots for the
$(0{.}5,0{.}5)$ scenario. This is a low-shape regime, characterized by a strong
concentration of mass near zero. As a consequence, the estimates exhibit higher
variability, especially for the transformation parameter $r$.

The estimates of $\alpha_1$ and $\alpha_2$ display small biases and RMSE values around
$0{.}09$--$0{.}12$ at $n=50$, with a consistent reduction as $n$ increases. In contrast,
the errors associated with $\widehat{r}$ increase rapidly with $r$, reaching high
levels for $r=2$ and $r=4$ in small samples. Although increasing the sample size
substantially reduces these errors, the estimation of $r$ remains more unstable than
that of the shape parameters.

\begin{figure}[H]
	\centering
	
	\begin{minipage}{0.60\linewidth}
		\centering
		\includegraphics[width=\linewidth]{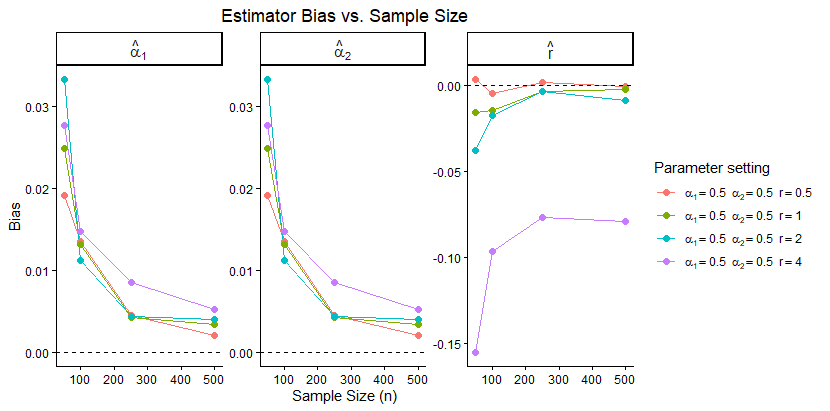}
		\caption*{(a) Bias of the estimators}
	\end{minipage}
	
	\vspace{4mm}
	
	\begin{minipage}{0.60\linewidth}
		\centering
		\includegraphics[width=\linewidth]{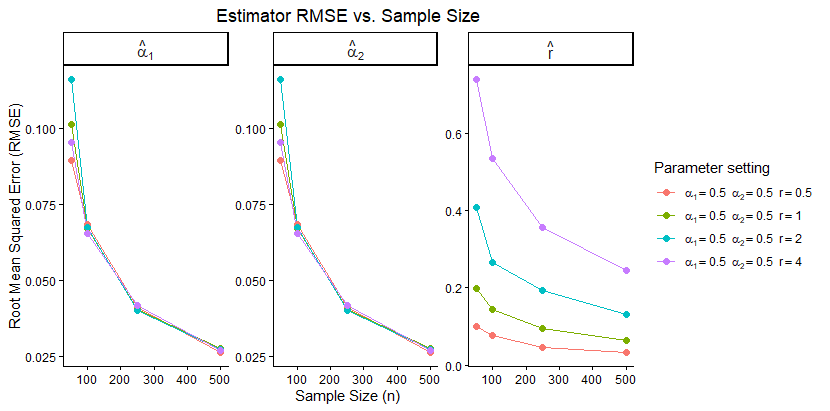}
		\caption*{(b) RMSE of the estimators}
	\end{minipage}
	
	\caption{%
		Bias and RMSE of the estimators for model $W$ with $\lambda_1=\lambda_2=1$
		under the scenario $(\alpha_1,\alpha_2)=(0.5,0.5)$.
	}
	\label{fig:Wfixo-bias-rmse-0505}
\end{figure}

\subsubsection{Scenario $(\alpha_1,\alpha_2)=(0{.}8,1{.}0)$}

The $(0{.}8,1{.}0)$ scenario exhibits moderate asymmetry, leading to improved
stability relative to the $(0{.}5,0{.}5)$ case. The plots in
Figure~\ref{fig:Wfixo-bias-rmse-0810} show a gradual reduction in bias and RMSE for
the shape parameters, especially from $n=100$ onward.

Even so, the parameter $r$ remains highly sensitive in small samples: errors increase
with larger values of $r$, but decrease substantially at $n=250$ and stabilize at
$n=500$. This behavior reinforces that the estimation of $r$ tends to require larger
sample sizes to achieve precision comparable to that obtained for $(\alpha_1,\alpha_2)$.

\begin{figure}[H]
	\centering
	
	\begin{minipage}{0.60\linewidth}
		\centering
		\includegraphics[width=\linewidth]{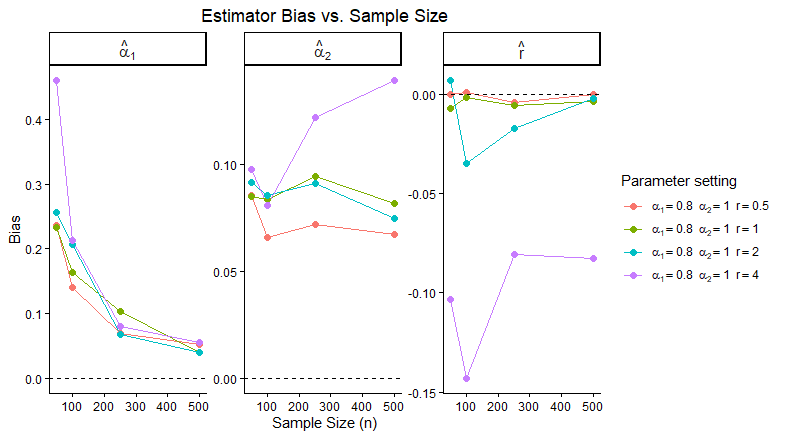}
		\caption*{(a) Bias of the estimators}
	\end{minipage}
	
	\vspace{4mm}
	
	\begin{minipage}{0.60\linewidth}
		\centering
		\includegraphics[width=\linewidth]{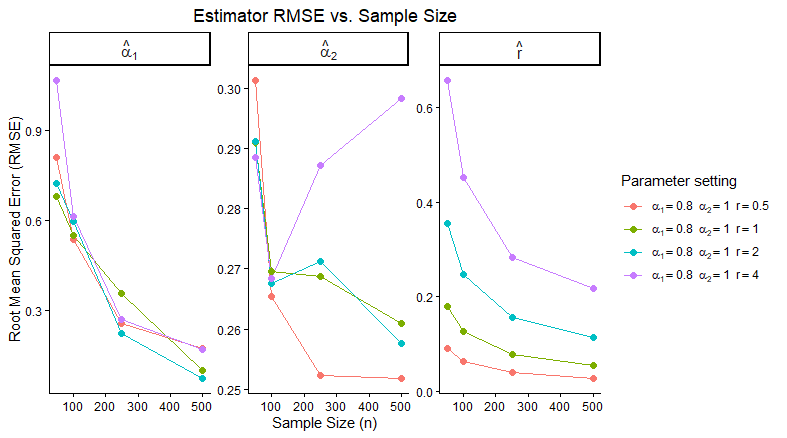}
		\caption*{(b) RMSE of the estimators}
	\end{minipage}
	
	\caption{%
		Bias and RMSE of the estimators for model $W$ with $\lambda_1=\lambda_2=1$
		under the scenario $(\alpha_1,\alpha_2)=(0.8,1.0)$.
	}
	\label{fig:Wfixo-bias-rmse-0810}
\end{figure}

\subsubsection{Scenario $(\alpha_1,\alpha_2)=(1{.}5,1{.}3)$}

The $(1{.}5,1{.}3)$ scenario combines moderate asymmetry with higher shape values,
yielding noticeably more stable estimators. Figure~\ref{fig:Wfixo-bias-rmse-1513}
shows that the shape parameters have small biases and RMSE values that decrease
consistently as $n$ grows.

Despite the overall improvement, $\widehat{r}$ still concentrates the largest
variability in the model, particularly for $r=4$. The pattern, however, is smoother
than in low-shape scenarios, indicating that larger values of $(\alpha_1,\alpha_2)$
help stabilize the estimation procedure.

\begin{figure}[H]
	\centering
	
	\begin{minipage}{0.60\linewidth}
		\centering
		\includegraphics[width=\linewidth]{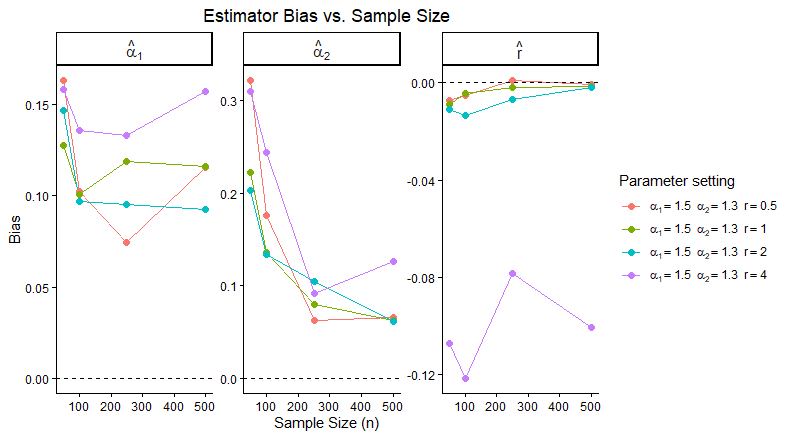}
		\caption*{(a) Bias of the estimators}
	\end{minipage}
	
	\vspace{4mm}
	
	\begin{minipage}{0.60\linewidth}
		\centering
		\includegraphics[width=\linewidth]{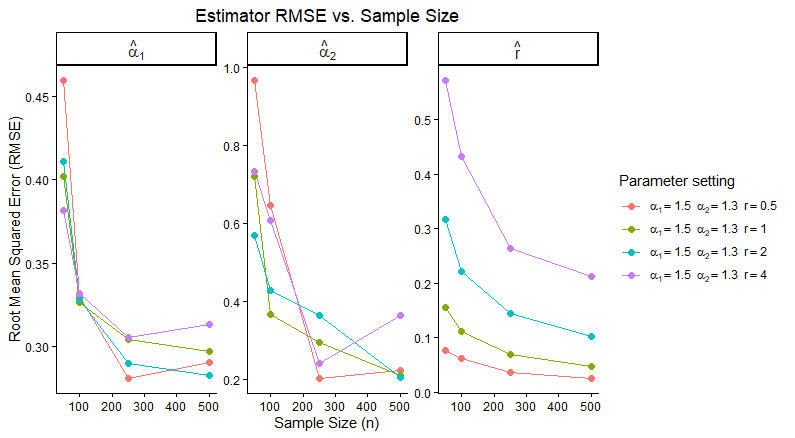}
		\caption*{(b) RMSE of the estimators}
	\end{minipage}
	
	\caption{%
		Bias and RMSE of the estimators for model $W$ with $\lambda_1=\lambda_2=1$
		under the scenario $(\alpha_1,\alpha_2)=(1.5,1.3)$.
	}
	\label{fig:Wfixo-bias-rmse-1513}
\end{figure}

\subsubsection{Scenario $(\alpha_1,\alpha_2)=(2{.}0,2{.}0)$}

In the symmetric $(2{.}0,2{.}0)$ scenario, the most stable behavior among the four
cases is observed. Figure~\ref{fig:Wfixo-bias-rmse-2020} shows that the shape
parameters exhibit small biases and RMSE values that decrease rapidly as $n$
increases. Symmetry contributes to the regularity of the estimates and to the
homogeneity of errors between $\widehat{\alpha}_1$ and $\widehat{\alpha}_2$.

Although $\widehat{r}$ still shows sensitivity to larger values of $r$, its
variability decreases more markedly than in the previous scenarios, reinforcing the
stabilizing effect of symmetry and higher shape values.

\begin{figure}[H]
	\centering
	
	\begin{minipage}{0.60\linewidth}
		\centering
		\includegraphics[width=\linewidth]{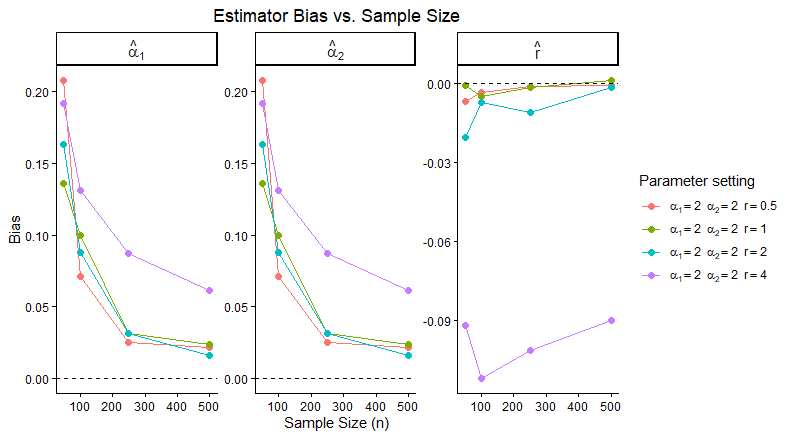}
		\caption*{(a) Bias of the estimators}
	\end{minipage}
	
	\vspace{4mm}
	
	\begin{minipage}{0.60\linewidth}
		\centering
		\includegraphics[width=\linewidth]{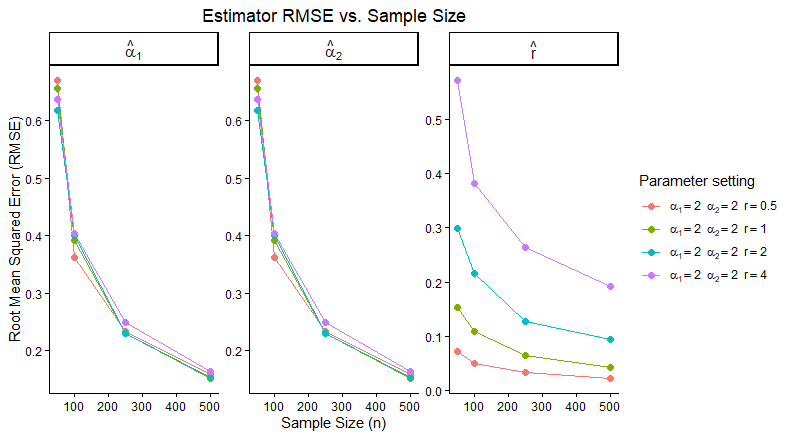}
		\caption*{(b) RMSE of the estimators}
	\end{minipage}
	
	\caption{%
		Bias and RMSE of the estimators for model $W$ with $\lambda_1=\lambda_2=1$
		under the scenario $(\alpha_1,\alpha_2)=(2.0,2.0)$.
	}
	\label{fig:Wfixo-bias-rmse-2020}
\end{figure}

\subsection{Model \texorpdfstring{$Z$}{Z} — Estimation with \texorpdfstring{$\lambda_1=\lambda_2=1$}{fixed lambdas}}

This appendix presents the complete results of the simulation study for model $Z$
under the configuration in which the scale parameters are kept fixed, with
$\lambda_1 = \lambda_2 = 1$. Unlike the full estimation approach, in this
configuration only the shape parameters $(\alpha_1,\alpha_2)$ and the transformation
parameter $r$ are estimated.

The results show that, in contrast to model $W$, fixing the scale parameters is not
sufficient to stabilize the estimators for model $Z$. This instability manifests
through large biases and high RMSE values, revealing important challenges in
parameter identifiability even at large sample sizes.

\subsubsection{Scenario \texorpdfstring{$(\alpha_1,\alpha_2)=(0{.}5,0{.}5)$}{(0.5,0.5)}}

In the symmetric low-shape scenario $(\alpha_1,\alpha_2)=(0{.}5,0{.}5)$, the
estimators $\widehat{\alpha}_1$ and $\widehat{\alpha}_2$ exhibit strong
overestimation for all combinations of $n$ and $r$. At $n=50$, biases exceed $15$ for
$r\leq 2$, with RMSE values above $50$. Although there is a progressive reduction as
$n$ increases, even at $n=500$ the RMSE remains between $7$ and $30$ for $r\leq 2$,
indicating substantial identifiability difficulties. Only for $r=4$ is a marked
improvement observed, with nearly zero bias and RMSE around $0{.}2$ at $n=500$.

The transformation parameter $\widehat{r}$ is even more unstable. For
$r\in\{0{.}5,1,2\}$, the biases at $n=50$ range from $21{.}9$ to $81{.}4$, decreasing
slowly to values between $2{.}8$ and $11{.}7$ at $n=500$. The RMSE remains very large
across the entire sample-size range (above $40$ for $r\leq 2$). Only for $r=4$ is
satisfactory convergence observed, with RMSE close to $1{.}8$ at $n=500$.

This scenario is one of the most challenging regimes for model $Z$, due to the
strong concentration of probability mass near the boundaries when
$\alpha_1=\alpha_2<1$.

\begin{figure}[H]
	\centering
	
	\begin{minipage}{0.60\linewidth}
		\centering
		\includegraphics[width=\linewidth]{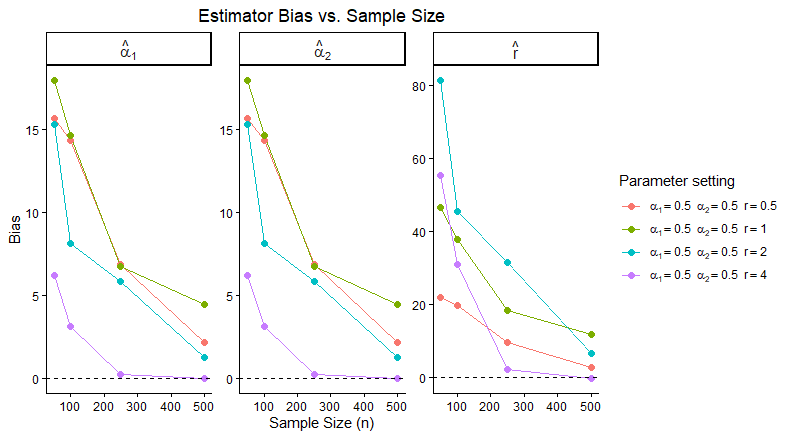}
		\caption*{(a) Bias of the estimators}
	\end{minipage}
	
	\vspace{4mm}
	
	\begin{minipage}{0.60\linewidth}
		\centering
		\includegraphics[width=\linewidth]{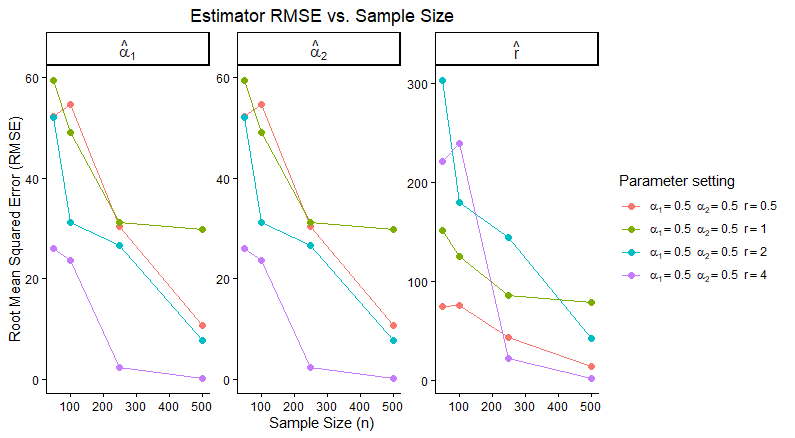}
		\caption*{(b) RMSE of the estimators}
	\end{minipage}
	
	\caption{%
		Bias and RMSE of the estimators for model $Z$ with $\lambda_1=\lambda_2=1$
		under the scenario $(\alpha_1,\alpha_2)=(0.5,0.5)$.
	}
	\label{fig:Zfixo-bias-rmse-0505}
\end{figure}

\subsubsection{Scenario \texorpdfstring{$(\alpha_1,\alpha_2)=(0{.}8,1{.}0)$}{(0.8,1.0)}}

In the moderately asymmetric scenario, the shape estimators remain overestimated for
all combinations of $n$ and $r$. At $n=50$, biases range between $12$ and $21$, with
RMSE values exceeding $45$. Although errors decrease gradually as $n$ increases,
biases between $3$ and $7$ and RMSE values above $10$ are still observed at $n=500$.
Asymmetry has a clear impact: larger values of $r$ amplify biases and RMSE values,
especially for $\widehat{\alpha}_1$.

The estimation of $r$ in this scenario is highly unstable for all true parameter
values. At $n=50$, biases range from about $7$ ($r=0{.}5$) to more than $65$ ($r=4$),
with RMSE between $21$ and $177$. Increasing $n$ partially reduces errors, but they
remain high even at $n=500$, especially for $r=2$ and $r=4$, whose RMSE values remain
close to $60$. Thus, the identification of $r$ remains quite problematic in this
scenario.

\begin{figure}[H]
	\centering
	
	\begin{minipage}{0.60\linewidth}
		\centering
		\includegraphics[width=\linewidth]{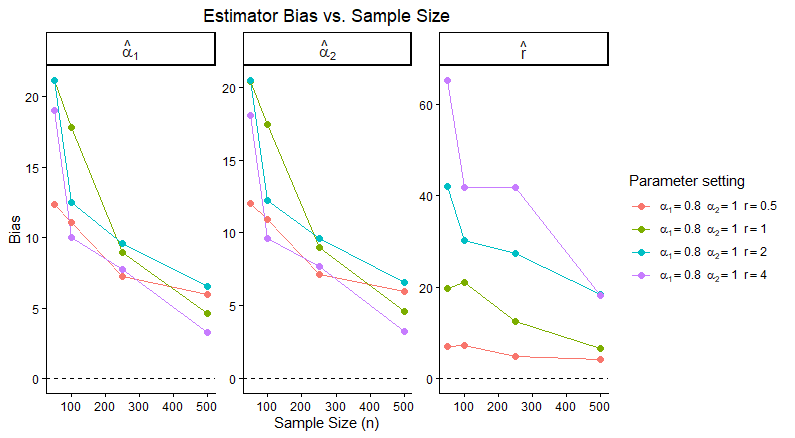}
		\caption*{(a) Bias of the estimators}
	\end{minipage}
	
	\vspace{4mm}
	
	\begin{minipage}{0.60\linewidth}
		\centering
		\includegraphics[width=\linewidth]{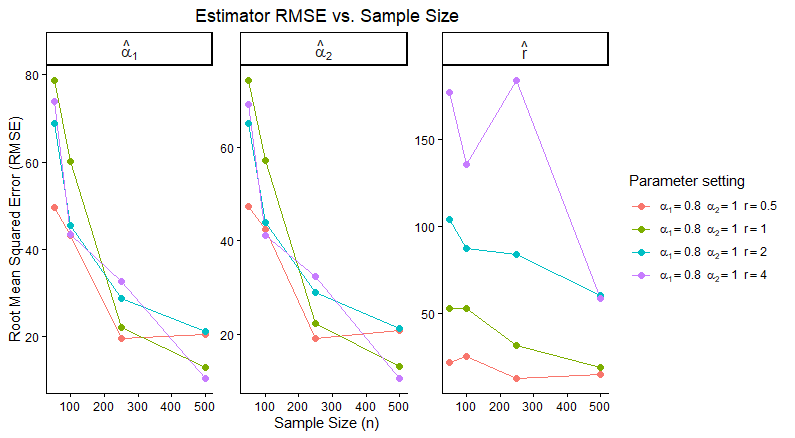}
		\caption*{(b) RMSE of the estimators}
	\end{minipage}
	
	\caption{%
		Bias and RMSE of the estimators for model $Z$ with $\lambda_1=\lambda_2=1$
		under the scenario $(\alpha_1,\alpha_2)=(0.8,1.0)$.
	}
	\label{fig:Zfixo-bias-rmse-0810}
\end{figure}

\subsubsection{Scenario \texorpdfstring{$(\alpha_1,\alpha_2)=(1{.}5,1{.}3)$}{(1.5,1.3)}}

In contrast to the previous two scenarios, the $(1{.}5,1{.}3)$ case exhibits much
more favorable performance. The shape estimators show moderate biases that decrease
as $n$ increases. At $n=50$, the bias of $\widehat{\alpha}_1$ ranges between
$0{.}13$ and $0{.}16$, whereas the bias of $\widehat{\alpha}_2$ lies between
$0{.}20$ and $0{.}32$. The RMSE values decrease from approximately $0{.}4$--$1{.}0$ to
ranges between $0{.}2$ and $0{.}4$ when $n$ reaches $500$, indicating adequate
precision and little impact of the value of $r$.

The transformation parameter is recovered very well for $r\leq 2$, with biases close
to zero and RMSE below $0{.}10$ for $n\geq 250$. For $r=4$, although moderate
negative biases persist, the RMSE values remain relatively small (between
$0{.}19$ and $0{.}21$ at $n=500$). This scenario shows that model $Z$ can be estimated
with good stability when the generating distribution has moderate shape and is not
excessively asymmetric.

\begin{figure}[H]
	\centering
	
	\begin{minipage}{0.60\linewidth}
		\centering
		\includegraphics[width=\linewidth]{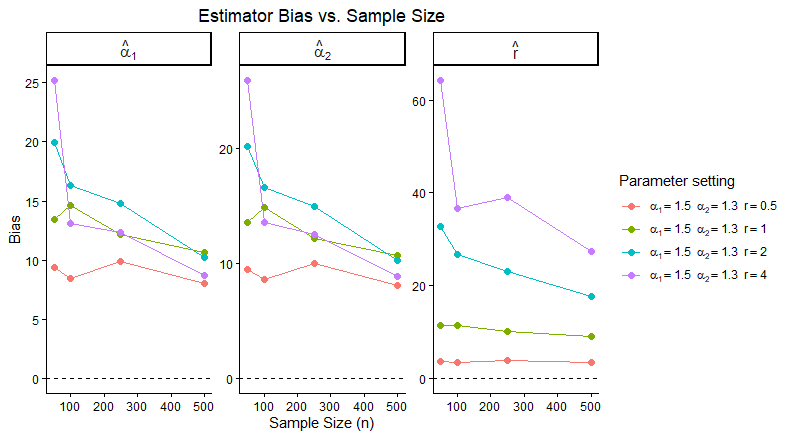}
		\caption*{(a) Bias of the estimators}
	\end{minipage}
	
	\vspace{4mm}
	
	\begin{minipage}{0.60\linewidth}
		\centering
		\includegraphics[width=\linewidth]{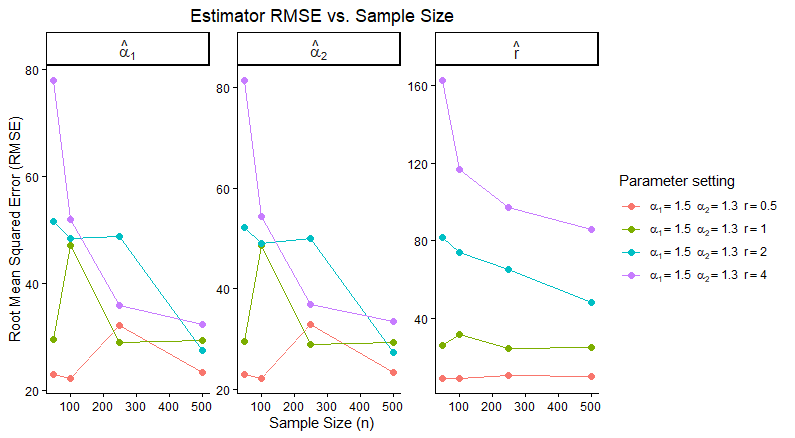}
		\caption*{(b) RMSE of the estimators}
	\end{minipage}
	
	\caption{%
		Bias and RMSE of the estimators for model $Z$ with $\lambda_1=\lambda_2=1$
		under the scenario $(\alpha_1,\alpha_2)=(1.5,1.3)$.
	}
	\label{fig:Zfixo-bias-rmse-1513}
\end{figure}

\subsubsection{Scenario \texorpdfstring{$(\alpha_1,\alpha_2)=(2{.}0,2{.}0)$}{(2.0,2.0)}}

Despite symmetry and higher shape, the $(2{.}0,2{.}0)$ scenario shows unsatisfactory
performance for model $Z$ with fixed scales. The shape estimators remain strongly
overestimated across all sample sizes, with biases ranging between $20$ and $40$ and
RMSE values often above $70$, even at $n=500$. Increases in $n$ lead only to modest
reductions.

The estimation of $r$ is also problematic: for $r\leq 2$, RMSE values range from $13$
to $98$, and for $r=4$ they reach values close to $200$, indicating severe
instability. These results suggest that the combination of larger shape values and
the transformation $T_r(x)$ produces densities that contain relatively little
information for estimating the parameters of model $Z$, even with $\lambda_1$ and
$\lambda_2$ fixed.

\begin{figure}[H]
	\centering
	
	\begin{minipage}{0.60\linewidth}
		\centering
		\includegraphics[width=\linewidth]{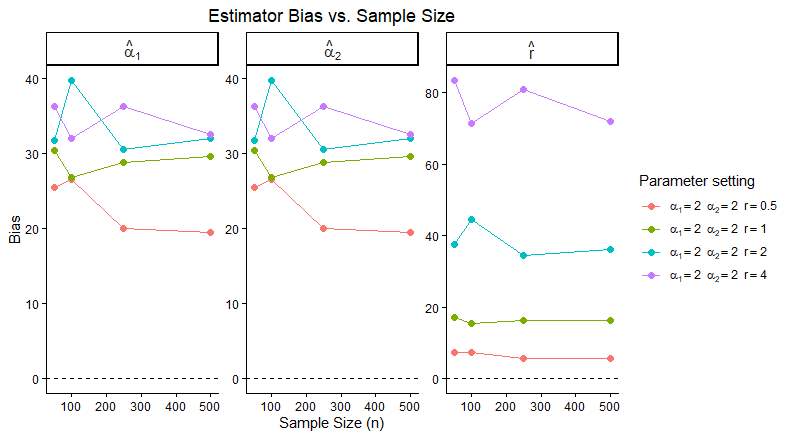}
		\caption*{(a) Bias of the estimators}
	\end{minipage}
	
	\vspace{4mm}
	
	\begin{minipage}{0.60\linewidth}
		\centering
		\includegraphics[width=\linewidth]{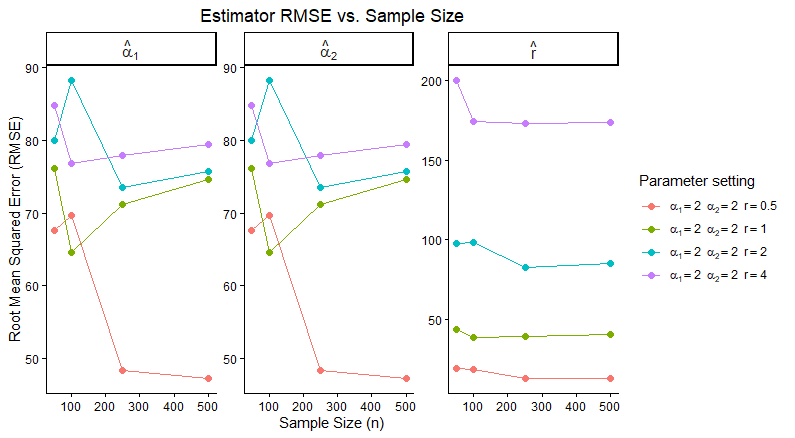}
		\caption*{(b) RMSE of the estimators}
	\end{minipage}
	
	\caption{%
		Bias and RMSE of the estimators for model $Z$ with $\lambda_1=\lambda_2=1$
		under the scenario $(\alpha_1,\alpha_2)=(2.0,2.0)$.
	}
	\label{fig:Zfixo-bias-rmse-2020}
\end{figure}


Unlike what is observed for model $W$, fixing the scale parameters is not sufficient
to stabilize estimation in model $Z$, especially in low-shape or asymmetric
scenarios. In these cases, errors remain large even at $n=500$, particularly for
the transformation parameter. Only in the moderate-shape scenario $(1{.}5,1{.}3)$ is
satisfactory performance observed. These results reinforce the need for larger
sample sizes or alternative estimation methods for model $Z$.

\end{appendices}




\end{document}